\newcommand{\remove}[1]{}
\definecolor{Black}{rgb}{0,0, 0}
\definecolor{Blue}{rgb}{0, 0 ,1}
\definecolor{Red}{rgb}{1, 0 ,0}
\definecolor{White}{rgb}{1, 1, 1}
\definecolor{Grey}{rgb}{.6, .6, .6}
\definecolor{Mygreen}{rgb}{.0, .7, .0}
\definecolor{Yellow}{rgb}{.55,.55,0}
\definecolor{mustard}{rgb}{1.0, 0.86, 0.35}
\definecolor{applegreen}{rgb}{0.55, 0.71, 0.0}
\tikzset{black node/.style={draw, circle, fill = black, minimum size = 5pt, inner sep = 0pt}}
\tikzset{white node/.style={draw, circle, fill = white, minimum size =
  5pt, inner sep = 0pt}}
\tikzset{bag/.style={draw, circle, fill = white, minimum size = 0.75cm, inner sep = 0pt}}
\tikzset{smallbag/.style={draw, circle, fill = white, minimum size = 0.33cm, inner sep = 0pt, thin}}
\tikzset{normal/.style = {draw=none, fill = none, rectangle}}
\newcommand{\mynewtheorem}[2]{
 \newaliascnt{#1}{dummy}
 \newtheorem{#1}[#1]{#2}
 \aliascntresetthe{#1}
 \expandafter\def\csname #1autorefname\endcsname{#2}
}
\renewcommand{\deg}{{\bf deg}}
\newcommand{\eadm}{\delta_{\rm e}^{\infty}}
\theoremstyle{plain}
\theoremstyle{definition}
\theoremstyle{remark}
\newcommand{\Nbb}{\mathbb{N}}
\newcommand{\intv}[1]{\left [ #1 \right ]}
\title{Edge Degeneracy: Algorithmic  and Structural Results}
\author{Stratis Limnios\thanks{Data Science and Mining (DaSciM) team, Laboratoire d'Informatique (LIX)
École Polytechnique, Palaiseau, France.}~\thanks{Supported  by  project ESIGMA (ANR-17-CE23-0010).} \and Christophe Paul\thanks{LIRMM, Univ Montpellier, CNRS, Montpellier, France.}~$^{\dagger}$\thanks{Supported  by  project DEMOGRAPH (ANR-16-CE40-0028).} \and Joanny Perret\thanks{GIPSA-lab, Université de Grenoble, Grenoble, France.} \and Dimitrios M. Thilikos$^{\ddagger\dagger\S}$\thanks{Supported by the Research Council of Norway and the French Ministry of Europe and Foreign Affairs, via the Franco-Norwegian project PHC AURORA 2019.}}
\date{}
\begin{document}
 \remove{}  
 %
 %
 
 \maketitle


 \begin{abstract}
\noindent 
We consider a cops and robber game where 
the cops are blocking edges of a graph, while the robber occupies
its vertices. At each round of the game,  
the cops choose some set of edges to block and right after the robber is obliged 
to move to another vertex traversing at most $s$ unblocked edges ($s$ can be seen as the speed of the robber).
Both parts have complete knowledge of the opponent's moves
and the cops win when they occupy all edges incident to the robbers position.
We introduce the capture cost on $G$ against a robber of speed $s$. This defines 
a hierarchy of invariants, namely $\delta^{1}_{\rm e},\delta^{2}_{\rm e},\ldots,\delta^{\infty}_{\rm e}$, where  $\delta^{\infty}_{\rm e}$  is an edge-analogue of the admissibility 
graph invariant, namely the {\em edge-admissibility} of a graph.
We prove that the problem asking wether $\delta^{s}_{\rm e}(G)\leq k$, is polynomially 
solvable when $s\in \{1,2,\infty\}$ while, otherwise, it is {\sf  NP}-complete.
Our main result is a structural theorem for graphs of  bounded edge-admissibility.
 We prove that every graph of edge-admissibility at most $k$ can be constructed using $(\leq k)$-edge-sums,  starting from graphs  
 whose all  vertices, except possibly from one, have degree at most $k$.
 Our structural result is approximately tight in the sense that graphs generated by this construction always have  edge-admissibility at most $2k-1$. Our proofs 
 are based on a precise structural characterization of the graphs that do not contain $\theta_{r}$ as an immersion, where $\theta_{r}$ is the graph on two vertices and $r$ parallel edges.
  \end{abstract}

  \noindent{\bf Keywords:} Graph Admissibility, Graph degeneracy, Graph Searching, Cops and robber games, Graph decomposition theorems.

\section{Introduction}

All graphs in this paper are undirected, finite, loopless, and may have parallel edges. We denote by $V(G)$ the set of vertices of a graph $G$, while we denote by  $E(G)$  the multi-set of its edges.  We also use the term {\em $s$-path of $G$} for a path of $G$ that has length at most $s$.

A {\em $(k,s)$-hide out} in a graph $G$ is a subset $S$ of its vertices such that, for each vertex $v\in S$, it is not possible to block all $s$-paths from $v$ to the rest of $S$  by less than $k$  vertices, different than $v$. The {\em $s$-degeneracy} of a graph $G$, has been introduced in~\cite{RicherbyT11sear} as 
the minimum $k$ for which $G$ contains a $(k,s)$-hide out. $s$-degeneracy  defines a hierarchy of graph invariants that, when $s=1$, gives the classic invariant of graph degeneracy~\cite{Matula63,BodlaenderWK06con,KirousisT96}  and,  when $s=\infty$, gives the parameter of $\infty$-admissibility that was introduced by Dvořák in~\cite{Dvorak13cons} and studied in~\cite{Dvorak12astr,KiersteadT91plan,NesetrilM09frat,ChenS93grap,NesetrilM12spar,Weissauer19onth,GroheKRSS15colo}. \medskip

In this paper we introduce and study the edge analogue of the above hierarchy of graph invariants, namely the  {\em $s$-edge-degeneracy hierarchy}.  The new parameter results from the one of $s$-degeneracy if we replace   $(k,s)$-hide outs by {\em $(k,s)$-edge hide outs} where we ask that, for each vertex $v$ of $S$, it is not possible to block all $s$-paths from $v$ to the rest of $S$  by less than $k$  edges.
It follows that  the value of  $s$-edge-degeneracy
may vary considerably  than the one of $s$-degeneracy.  
For instance, consider the graph $\theta_{k}$ consisting 
of two vertices and $k$ parallel edges between them.
It is easy to see that, for every positive integer $s$,  the 
$s$-degeneracy of $\theta_k$ is 2, while it $s$-edge-degeneracy is $k$
(the two vertices form a $(k,s)$-edge hideout). In other words,  $s$-edge-degeneracy can be seen 
as an alternative way to extent the notion of degeneracy using edge separators instead of vertex separators.
\medskip

In  \autoref{magnetic}
we introduce two alternative  definitions for  $s$-edge-degeneracy, apart from the one using $(k,s)$-edge hide outs. The first is in terms of a graph searching game 
and the second is in terms of graph layouts. Next, we prove a min-max theorem 
supporting the equivalence of the three definitions. As a consequence of this theorem,
we can  identify the computational complexity of $s$-edge-degeneracy: it can be computed in polynomial time when $s\in\{1,2,\infty\}$, while for all other values of $s$, 
desiding whether its value is at most $k$
is an {\sf NP}-complete problem.\medskip

Our next step is to provide a structural theorem for  the $\infty$-edge-degeneracy that, from now on,  we call $\infty$-edge-admissibility. For $\infty$-degeneracy (also known as $\infty$-admissibility), Dvořák proved  the following structural characterization~\cite[Theorem~6]{Dvorak12astr}.

\begin{proposition}
\label{asdfasdfsdfsdfa}
For every $k$, there exist constants $d_k$, $c_{k}$ and $a_k$ such that every graph $G$ with $\infty$-admissibility at most $k$ can be constructed by  applying $(\leq c_k)$-clique sums starting from 
graphs where at most $d_{k}$ vertices have degree at least $a_{k}$.
%
\end{proposition}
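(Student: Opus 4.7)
The plan is to proceed by induction on $|V(G)|$ and use the hide-out characterisation as the main driver. Since $G$ has $\infty$-admissibility at most $k$, $G$ contains no $(k{+}1,\infty)$-hide out, so iteratively one can find a vertex $v$ in the current subset whose paths to the rest are blockable by some vertex set of size at most $k$. Peeling such vertices off one at a time yields a global admissibility ordering $v_1,\ldots,v_n$ together with separators $X_1,\ldots,X_n\subseteq V(G)$, each of size at most $k$, where $X_i$ blocks every path from $v_i$ to $\{v_{i+1},\ldots,v_n\}$ in $G$. This ordering is the skeleton on which the whole decomposition will be hung.

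The second step is to promote this linear structure into a tree. For each $i$ I would consider the cone $C_i$ consisting of $v_i$, the separator $X_i$, and all vertices reachable from $v_i$ in $G-X_i$ that still lie in $\{v_i,\ldots,v_n\}$. By choosing $X_i$ as a minimum cut and invoking Menger's theorem, one obtains $|X_i|$ internally vertex-disjoint $v_i$-to-$X_i$ paths inside $C_i$; this rigidity is what forces different cones, after a standard normalisation, to be either nested or essentially disjoint, producing a laminar family. Augmenting each $X_i$ with a bounded number of virtual edges to make it into a clique then expresses $G$ as a $(\leq c_k)$-clique-sum tree whose nodes correspond to the pieces of the laminar hierarchy.

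The heart of the argument, and the step I expect to be the main obstacle, is bounding the number of high-degree vertices inside each base piece. The natural claim to target is: if some piece contained more than $d_k$ vertices of degree at least $a_k$, then via a greedy extraction along the admissibility ordering (combined with Menger to pull internally disjoint connections between high-degree vertices) one could build a $(k{+}1,\infty)$-hide out inside that piece, contradicting the hypothesis on $G$. Intuitively, each high-degree vertex supplies so many internally disjoint paths to the rest of the piece that any candidate blocker of size at most $k$ must miss a connection once enough such vertices accumulate, and a pigeonhole/Ramsey-flavoured argument then produces the forbidden hide-out. Turning this into explicit (probably exponential) bounds on $d_k$, $a_k$ and $c_k$ in terms of $k$ is the delicate calculation; once it is in place, the theorem follows by recursing on each piece and gluing the partial decompositions along the small clique separators produced in the second step.
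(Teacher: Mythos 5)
This proposition is not proved in the paper: it is quoted verbatim from Dvo\v{r}\'ak~\cite{Dvorak12astr}, and the paper further attributes an alternative proof with explicit constants ($d_k=k$, $c_k=k$, $a_k=2k(k-1)$) to Wei\ss auer~\cite{Weissauer19onth}. The paper's own contribution is the edge analogue (\autoref{presednsts}), proved by a different route --- a forbidden-immersion characterisation ($\theta_{k+1}$), tree-partitions of bounded adhesion, and a two-level potential-function argument on ${\bf w}({\cal D})$ and ${\bf status}(T,t)$. So there is no ``paper proof'' of the present statement to compare against; your proposal has to be judged on its own terms.

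On those terms there are two real gaps. The first is the claim that the cones $C_i$, ``after a standard normalisation,'' become nested or essentially disjoint. Minimum $(v_i,\{v_{i+1},\ldots,v_n\})$-vertex-separators for different $i$ are \emph{not} laminar in general; making them so requires an explicit uncrossing argument, and for vertex separators this is more delicate than for edge cuts, where the submodularity of $\rho(X)=|E_G(X,V(G)\setminus X)|$ does the uncrossing almost for free (this is exactly what the paper exploits in its \autoref{connects} for the edge version). You would need to specify the normalisation, prove it terminates, and prove it preserves both $|X_i|\le k$ and the separation property. As written, this step is asserted, not established, and I do not see a cheap way to make it work.

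The second gap you identify yourself: the claim that each base piece has at most $d_k$ vertices of degree $\ge a_k$ is stated as a target (``once it is in place, the theorem follows''), not argued. That sentence is accurate, but it is also precisely where the theorem lives. A correct argument cannot just gesture at Menger plus pigeonhole: one has to show that many high-degree vertices crammed into one piece would constitute a $(k{+}1,\infty)$-hide-out, i.e.\ that ${\bf supp}_{G,\infty}(x,S\setminus\{x\})\ge k+1$ holds \emph{simultaneously at every} $x$ in the candidate set $S$, and controlling the interaction between the Menger path systems of different vertices in $S$ is the nontrivial combinatorial content (this is where Wei\ss auer's $a_k=2k(k-1)$ comes from). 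As it stands, the proposal is a plausible plan, not a proof.
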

In the above proposition the {\em $(\leq k)$-clique sum} operation receives as input  two graphs $G_{1}$ and $G_{2}$
such that each $G_{i}$ contains a clique $K_{i}$ with vertex set $\{v_{1}^{i},\ldots, v_{\rho}^{i}\}, \rho\leq k$. The outcome of the operation is the graph occurring if we identify 
$v_{j}^{1}$ and $v_{j}^{2}$ for $j\in\{1,\ldots,\rho\}$ and then remove some of the edges between the identified vertices. 
While the constants of \autoref{asdfasdfsdfsdfa} where not specified 
in~\cite{Dvorak12astr}, an alternative proof was recently given by Weißauer in~\cite{Weissauer19onth} where
$d_{k}=k$, $c_k=k$,  and $a_{k}=2k(k-1)$.\medskip

In \autoref{zfsdfsdfasdfsdafasd}  we provide a counterpart of \autoref{asdfasdfsdfsdfa} for  the $\infty$-{\em edge}-{\em admissibility} that is the following

\begin{theorem}
\label{presednsts}
For every $k$, every graph $G$ with $\infty$-{\em edge}-{\em admissibility} at most $k$ can be constructed by  applying $(\leq k)$-edge sums starting from 
graphs where at most one vertex has  degree at least $k+1$. 
\end{theorem}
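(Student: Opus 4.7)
The plan is to proceed by induction on the number $h(G)$ of vertices of $G$ of degree at least $k+1$, with base case $h(G)\leq 1$ matching exactly the hypothesis on the building blocks. The bridge from $\eadm(G)\leq k$ to a structural decomposition goes through the exclusion of $\theta_{k+1}$ as an immersion in $G$. Indeed, the two branch vertices of $\theta_{k+1}$ form a $(k+1,\infty)$-edge hideout, because any edge separation between them must destroy all $k+1$ parallel edges, so $\eadm(\theta_{k+1})=k+1$. Moreover $\eadm$ is monotone under the immersion relation: an edge hideout of $H$ can be lifted along the edge-disjoint paths realizing an immersion of $H$ in $G$ to an edge hideout of $G$ of at least the same value. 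These two facts together force $G$ to exclude $\theta_{k+1}$ as an immersion.

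In the inductive step, assume $h(G)\geq 2$ and pick $u,v\in V(G)$ both of degree at least $k+1$. Since $G$ contains no $\theta_{k+1}$ immersion, Menger's theorem yields an edge cut $F$ separating $u$ from $v$ with $|F|\leq k$; let $(A,B)$ be the induced vertex bipartition, with $u\in A$ and $v\in B$. Define $G_1$ as the subgraph of $G$ on $A\cup(V(F)\cap B)$ with edge set $E(G[A])\cup F$, and $G_2$ symmetrically. Then $G$ is recovered from $G_1$ and $G_2$ by a $(\leq k)$-edge sum along the $|F|\leq k$ shared edges of $F$. Each $G_i$ is a subgraph of $G$, so $\eadm(G_i)\leq k$. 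Moreover $h$ strictly decreases in each piece: vertices of $A$ retain their degree in $G_1$, while the attached vertices in $V(F)\cap B$ inherit degree at most $|F|\leq k$, so $v$ does not contribute to $h(G_1)$, and symmetrically $u$ does not contribute to $h(G_2)$. Applying the induction hypothesis to $G_1$ and $G_2$ and concatenating the resulting decompositions through the edge sum along $F$ yields the desired construction of $G$.

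The main obstacle I anticipate is the immersion-monotonicity claim for $\eadm$: lifting an edge hideout through an immersion is less immediate than in the vertex-admissibility case because the edges of the host image correspond to internally edge-disjoint paths in $G$ that may share interior vertices, and a clean proof likely routes through the min-max characterization of $\eadm$ announced in \autoref{magnetic}. A secondary technical point is to match the greedy cut-split above with the paper's formal notion of $(\leq k)$-edge sum, in particular ensuring that $G_1\cap G_2$ consists of exactly the endpoints and edges of $F$. The paper's actual argument appears to use the refined structural description of $\theta_r$-immersion-free graphs announced in the abstract, which should yield a more canonical, torso-style decomposition than the two-way split sketched here, and may be the route needed to avoid subtle issues with repeated endpoints along $F$.
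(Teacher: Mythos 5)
Your induction on the number $h(G)$ of vertices of degree at least $k+1$, with the bridge to $\theta_{k+1}$-immersion-freeness, is a genuinely different route from the paper's: the paper instead builds, in \autoref{connects}, a single globally optimal $k$-tight tree-partition by a layered minimization argument (first minimizing ${\bf w}$, then ${\bf status}$, then the cost of a separating cut) using submodularity of $\rho$. Your local, recursive cut-splitting is closer in spirit to the Gomory--Hu alternative the paper mentions after \autoref{furrowed}, and it could in principle give a shorter proof. The bridge itself (from $\eadm(G)\leq k$ to $\theta_{k+1}\nleq G$) is exactly the paper's \autoref{conceive}, and the obstacle you flag there is not a real one: in the edge setting you only need edge-disjointness of the hideout paths, which ``un-lifting'' preserves verbatim, so no new machinery is required.

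The genuine gap is in the split step, and it is not the one you anticipate. First, the pieces $G_1,G_2$ as you define them (each containing all of $V(F)$ and all of $F$) do not recombine into $G$ by a $(\leq k)$-edge sum: the operation defined in \autoref{asdfsdasfdasdfsdgfdgdfshdghf} identifies a \emph{single} vertex of $G_1$ with a single vertex of $G_2$ and lifts the incident edge pairs, so the two summands overlap in one vertex only, which is then deleted, whereas your $G_1\cap G_2$ contains all of $V(F)$ and $F$. The correct pieces are the torsos on either side of the cut: $G_1=G[A]$ plus one new vertex $v_1$ of degree $|F|$ carrying the $A$-side attachments of $F$, and symmetrically $G_2$ (this is exactly what \autoref{teaching} encodes). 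Second, once you pass to torsos, $G_1$ is no longer a subgraph of $G$, so the line ``each $G_i$ is a subgraph of $G$, hence $\eadm(G_i)\leq k$'' fails; worse, $\theta_{k+1}\nleq G_1$ is \emph{not} automatic for an arbitrary $(u,v)$-cut of size $\leq k$, because contracting $B$ to a single vertex can manufacture edge-connectivity $G$ does not have (two paths through $v_1$ in $G_1$ require edge-disjoint routes in $G[B]$ between the corresponding $B$-endpoints of $F$, which a sparse $G[B]$ may not supply). The fix is to take $F$ to be a \emph{minimum} $(u,v)$-cut, so that Menger provides $|F|$ edge-disjoint $(u,v)$-paths whose $B$-segments are edge-disjoint paths from the $B$-endpoints of $F$ to $v$; lifting along these exhibits the torso $G_1$ as an immersion of $G$, and similarly for $G_2$. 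Only then does $\theta_{k+1}\nleq G_i$ (and $\eadm(G_i)\leq k$ via \autoref{conceive}) follow, after which your $h$-count drop, the strict size decrease required by the definition of ${\cal A}_k^{(\leq k)}$, and the recombination by a single $|F|$-edge sum close the induction.
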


Observe that  \autoref{presednsts} occurs from \autoref{asdfasdfsdfsdfa} if  we replace $\infty$-{\em admissibility}    by  $\infty$-{\em edge}-{\em admissibility},
if, instead of clique sums, we consider edge sums, and if we set $d_{k}=1$, $c_k=k$, and $a_{k}=k+1$. The $(\leq k)$-edge sum operation (the  definition is postponed to \autoref{asdfsdasfdasdfsdgfdgdfshdghf}) was defined  in~\cite{Wollan15thes} (see also~\cite{GiannopoulouKT15forb})
and can be seen as the  edge-counterpart of clique sums.

The proof of our structural theorem is derived by a {\sl precise} structural 
characterization of the graphs where each pair of vertices is separated by a cut of size at most $k$.
We prove that these graphs are {\sl exactly} those that can be constructed using $(\leq k)$-edge sums from 
graphs where all but one of their vertices have degree at most $k$. This directly implies 
our structural theorem for $\infty$-{\em edge}-{\em admissibility}, as every pair of two vertices linked by $k+1$ pairwise edge-disjoint paths is a $(k+1,\infty)$-edge hide out. 

Our last result is that the converse of the structural characterization in \autoref{presednsts} holds in an approximate way: if $G$ can be constructed using $(\leq k)$-edge sums from 
graphs where all but one of their vertices have degree at most $k$, then 
the  $\infty$-{edge}-{admissibility} of $G$ is at most $2k-1$. This suggests that our decomposition theorem is indeed the correct choice for the parameter of  $\infty$-{edge admissibility}.

\section{Basic definitions}

\textbf{Sets and integers.}
Given a non-negative integer $s$, we denote by $\Bbb{N}_{\geq s}$ the set of all non-negative integers that are not smaller than $s$.  We also denote $\Bbb{N}^{+}_{\geq s}=\Bbb{N}_{\geq s}\cup\{\infty\}$.
Given two integers $p\leq q,$  we set $\intv{p,q}=\{p,p+1,\ldots,q\}$
and given a $k\in\Nbb_{\geq 0}$ we define $[k]=[1,k].$ Given a set $A,$ we use $2^{A}$ for the set of all its subsets, we define
$\binom{A}{2}:=\{S\mid S\in 2^{A} \mbox{~and~} |S|=2\},$ and, given a $k\in\Bbb{N}_{\geq 0}$
we  denote by $A^{(\leq k)}$ the set of 
all subsets of $A$ that have size at most $k$. A {\em near-partition} of a set $A$ is a collection of pairwise disjoint sets whose union is $A$. A {\em bipartition} of $A, |A|\geq 2$ is a near-partition of $A$ into two 
non-empty sets.

\paragraph{Graphs.} All graphs in this paper are undirected, finite, loopless, and may have parallel edges. We denote by $V(G)$ the set of vertices of a graph $G$ while we use $E(G)$ for the multi-set of its edges.
Given a graph $G$ and a vertex $v$, we define $E_{G}(v)$ as the multi-set of all edges
of $G$ that are  incident to $v$. We  define the {\em neighborhood} of  $v$ as $N_{G}(v)=(\bigcup_{e\in  E_{G}(v)}e)\setminus\{v\}$,  the {\em edge-degree} of $v$ as $\deg_{G}(v)=|E_{G}(v)|$. We also define $\Delta(G)=\max\{\deg_{G}(v)\mid v\in V(G)\}$. 
Given a $F\subseteq E(G)$, we define $G\setminus F=(V(G),E(G)\setminus F)$.

Given a tree $T$ and two vertices $a,b\in V(T)$ we define $aTb$ as the path
of $T$ connecting $a$ and $b$.
Let $G$ be a graph and let $S_{1},S_{2}\subseteq V(G)$ where $S_{1}\cap S_{2}=\emptyset$. We define $$E_{G}(S_{1},S_{2})=\{e\in E(G)\mid e\cap V_{1}\neq\emptyset \mbox{~and~}e\cap V_{2}\neq\emptyset\}.$$
A {\em cut} of a graph $G$ is any  bipartition $(X,\overline{X})$ of its vertices. 
The {\em edges} of a cut $(X,\overline{X})$
is the set $E(X,\overline{X})$ while the {\em size}
of $(X,\overline{X})$ is equal to $|E(X,\overline{X})|$.
Given two distinct vertices $x$ and $y$ of $G$, an 
{\em $(x,y)$-cut} of $G$ is a 
cut $(X,\overline{X})$ of $G$ such that $x\in X$ and $y\in \overline{X}$.

We define the function $\rho: 2^{V(G)}\to\Bbb{N}$ such that 
$\rho(X)=|E_{G}(X,V(G)\setminus X)|$. It is easy to see that $\rho$ is a submodular function, ie., 
\begin{eqnarray}
\forall X,Y\in 2^{V(G)} & \rho(X\cap Y)+\rho(X\cup Y)\leq \rho(X)+\rho(Y).
\end{eqnarray}

Given a graph $G$ and two distinct $x,y\in V(G)$, we call an {\em $(x,y)$-$s$-path}
every $s$-path in $G$ starting from $x$ and finishing on $y$. We also use the term {\em $(x,y)$-path} as a 
shortcut for $(x,y)$-$\infty$-path.
We define the function ${\bf  cut}_{G,s}: {V(G)\choose 2}\to \Bbb{N}_{\geq 0}$ so that 
${\bf  cut}_{G,s}(x,y)$ is equal to the minimum size of a $F\subseteq E(G)$ such that $G\setminus F$ 
does not contain any $(x,y)$-$s$-path.
The complexity of   computing ${\bf  cut}_{G,s}(x,y)$ 
is provided   by the next proposition (see~\cite{BaierEHKSS06len,MahjoubM10maxf,ItaiPS82thec}).

\begin{proposition}
\label{writings}
If $s\in\{1,2,\infty\}$, then the problem that, given a graph $G$, a $k\in \Bbb{N}$, and  two distinct vertices $a$ and  $b$ of $G$, asks whether ${\bf cut}_{G,s}(a,b)\leq k$  
is polynomially solvable, while it is  {\sf  NP}-complete if $s\in\Bbb{N}_{\geq 3}$.
\end{proposition}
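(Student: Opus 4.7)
The plan is to handle the four cases independently. For $s=1$, an $(a,b)$-$1$-path is simply an $ab$ edge, so ${\bf cut}_{G,1}(a,b)$ equals the multiplicity of $ab$ in $G$ and is trivially computed in polynomial time. For $s=\infty$, the problem is exactly the classical minimum $(a,b)$-edge cut, which is polynomially solvable by the max-flow/min-cut theorem using any standard network flow algorithm.

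For $s=2$, I would reduce to a single max-flow computation on an auxiliary directed network $N$. Take $a$ as source and $b$ as sink, include a parallel unit-capacity arc $a\to b$ for every $ab$-edge of $G$, and, for every common neighbor $c\in N_{G}(a)\cap N_{G}(b)$, include a parallel unit-capacity arc $a\to c$ for every $ac$-edge of $G$ and a parallel unit-capacity arc $c\to b$ for every $cb$-edge of $G$. The $(a,b)$-$2$-paths of $G$ are then in one-to-one correspondence with the directed $(a,b)$-paths of $N$, via a bijection that identifies each edge of $G$ with a unique arc of $N$. Consequently a minimum arc cut of $N$ is exactly a minimum edge set whose removal from $G$ destroys every $2$-path from $a$ to $b$, and such a cut is computable in polynomial time by max-flow/min-cut.

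For the hardness part, I would follow the strategy of Itai, Perl, and Shiloach~\cite{ItaiPS82thec} and give, for every fixed $s\geq 3$, a polynomial reduction from a canonical {\sf NP}-complete problem such as \textsc{3-SAT}. The intended construction would encode variable assignments as choices among pairs of length-bounded alternatives between $a$ and $b$, and clauses as bottleneck gadgets designed so that a cut of size $k$ exists in the constructed graph if and only if the formula is satisfiable; padding paths then extend the gadgets uniformly to each prescribed $s\geq 3$. Membership in {\sf NP} is immediate, since for any candidate $F\subseteq E(G)$ with $|F|\leq k$, checking that $G\setminus F$ contains no $(a,b)$-$s$-path reduces to a depth-$s$ breadth-first search from $a$.

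The main obstacle is the gadget design in the hardness proof: the diameter of each gadget must be tightly controlled so that exactly the intended $s$-paths survive and no shorter unintended bypass appears between $a$ and $b$. This is what forces $s\geq 3$ (where the length bound already allows combinatorially rich interactions between gadgets) to separate from the tractable regime $s\in\{1,2\}$, where the relevant paths are too short to escape the direct max-flow modeling used above.
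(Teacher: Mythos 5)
The paper does not prove \autoref{writings}: it states the proposition as a known fact with a pointer to~\cite{BaierEHKSS06len,MahjoubM10maxf,ItaiPS82thec} and gives no argument of its own, so your proposal is supplying a proof that the paper deliberately outsources to the literature.

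Your three polynomial cases are essentially correct. The $s=1$ case is immediate (count the parallel $ab$-edges), and $s=\infty$ is the classical minimum $(a,b)$-edge cut via max-flow. For $s=2$, the reduction is sound: every $(a,b)$-$2$-path of $G$ is either a single $ab$-edge or a pair $ac,cb$ with $c\in N_G(a)\cap N_G(b)$, so the unit-capacity digraph $N$ you build on $\{a,b\}\cup(N_G(a)\cap N_G(b))$ has its arcs in bijection with exactly those edges of $G$ that can lie on an $(a,b)$-$2$-path, and its directed $(a,b)$-paths in bijection with the $(a,b)$-$2$-paths of $G$; consequently minimum arc cuts of $N$ correspond to minimum blocking edge sets in $G$, and a single max-flow computation decides ${\bf cut}_{G,2}(a,b)\leq k$. (It would be slightly cleaner to say ``each arc of $N$ corresponds to a unique edge of $G$'' rather than the converse, since edges of $G$ far from $a,b$ have no image in $N$ --- but that omission is harmless precisely because such edges lie on no $(a,b)$-$2$-path.)

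The hardness direction, however, is only a plan, not a proof. You announce that you would build variable and clause-bottleneck gadgets in the spirit of~\cite{ItaiPS82thec} and extend them by padding, but you never define the gadgets, never verify that no unintended short $(a,b)$-paths slip through the construction, and never establish the reduction for every fixed $s\geq3$ down to the boundary case $s=3$. The closing remark that $s\geq3$ ``already allows combinatorially rich interactions'' is a statement of intuition rather than an argument; pinning down the exact dichotomy threshold --- in particular settling $s=3$ as opposed to some larger constant --- is precisely the nontrivial content that the proposition's citations to Baier et al.\ and Mahjoub--McCormick supply. Membership in {\sf NP} is indeed immediate via bounded-depth BFS as you say, but the hardness half as written remains a genuine gap: if you wish to supply a self-contained proof, you must exhibit the gadgets, prove the satisfiability equivalence, and control path lengths tightly enough to rule out unintended $s$-paths at the threshold $s=3$.
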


%

\section{Graph searching and $s$-edge-degeneracy}
\subsection{A search game.}
\label{magnetic}

The study of graph searching parameters is an active field of graph theory. Several 
important graph parameters have their search-game analogues that provide useful
insights on their combinatorial and algorithmic properties. 
(For related surveys, see~\cite{FominT08anan,Bienstock89-Gra,FominP96-Pu,Alspach04-Se,AlpernG03-Th}.) 

We introduce a graph searching game, where the opponents 
are a group of cops and a robber. In this game,  
the cops are blocking {\sl edges} of the graph, while the robber resides on the {\sl vertices}.
The first move of the game is done by the robber, who chooses 
a vertex to occupy. Then, the game is played in rounds. In each round, 
 first the cops   block a set of edges and next  the robber moves 
to another vertex via a path consisting of at most $s$ unblocked edges. The robber
 {\sl cannot 
stay put} and he/she is captured if, after the move of the cops, all the edges incident to his/her current location are blocked. Both cops and robbers have full 
knowledge of their opponent's current position and they take it into consideration before they make their next move.  We next give the formal definition of the game.

The game is parameterized by the speed $s\in\Bbb{N}^{+}_{\geq 1}$ of the robber. A {\em search strategy on $G$} for the cops is a function $f: V(G)\rightarrow 2^{E(G)}$
that, given the current position $x\in V(G)$ of the robber in the end of a round, outputs the set $f(v)$ of the edges that should be blocked in the beginning of the next round.
The {\em cost} of a cop strategy $f$ is defined as ${\bf cost}(f)=\max\{|f(v)|\mid v\in V(G)\}$, i.e., the maximum number of edges that may be blocked by the robbers according to $f$.

An {\em escape strategy on} $G$ 
for the robber is a pair $R=(v_{\rm start},g)$ where $v_{\rm start}$ is the vertex 
of robber's first move and $g: 2^{E(G)}\times V(G)  \rightarrow V(G)$ is a function that, given the set $F$ of blocked edges in the beginning of a round
and the current position $x$ of the robber, outputs the vertex $u=g(F,v)$ where the robber should move. 
Here the natural restriction for $g$ is that there is an $s$-path from 
$v$ to $u$ in $G\setminus F$. 
Clearly, if $F$ is the set of edges that are incident to $v$, then $g(F,v)$ should 
be equal to $v$ and this expresses the situation where the robber is captured.

Let $f$ and $R=(v_{\rm start},g)$ be strategies for the cop and the robber respectively.
The  {\em game scenario}  generated by the pair $(f,R)$ is the infinite sequence $v_0,F_{1},v_{1},F_{2},v_{2},\ldots,$ where $v_{0}=v_{\rm start}$ and for every $i\in\Bbb{N}_{\geq 1}$, $F_{i}=f(v_{i-1})$ and $v_{i}=g(F_{i},v_{i-1})$. If $v_{i}=v_{i-1}$ for some $i\in \Bbb{N}_{\geq 1}$,
then $(f,R)$ is a {\em cop-winning} pair, otherwise it is a {\em robber-winning} pair.

The {\em capture cost against a robber of speed $s$} in a graph $G$, denoted by ${\bf cc}_{s}(G)$  is the minimum $k$ for which there is a 
cop strategy $f$, of cost at most $k$, such that for every robber strategy $R$, $(f,R)$
is a cop-winning pair.

\subsection{A min-max theorem for $s$-edge-degeneracy}

\paragraph{$s$-edge-degeneracy.}

Let $G$ be a graph, $x\in V(G)$,  $S\subseteq V(G)\setminus \{x\}$, and $s\in\Bbb{N}^{+}_{\geq 1}$. 
We say that a set $A\subseteq E(G)$ is an
\emph{$(s, x, S)$-edge-separator} if every $s$-path of $G$ from $x$ to some vertex in $S$, contains some edge from $A$.
 We  define ${\bf supp}_{G,s}(x,S)$ to be the minimum size
of  an $(s, x, S)$-edge-separator in $G$. 
\medskip

Let $G$ be a graph and let $L=\langle v_{1},\ldots,v_{r} \rangle$ be a layout (i.e. linear ordering) of its vertices. Given an $i\in[r],$ we denote $L_{\leq i}=\langle v_{1},\ldots,v_{i} \rangle$. Given an $s\in\Bbb{N}^{+}_{\geq 1}$, we define the {\em $s$-edge-support} of a vertex $v_{i}$ in $L$ as ${\bf supp}_{G,s}(v_i,L_{\leq i-1})$.
The {\em $s$-edge-degeneracy} of $L$, is the maximum $s$-edge-support of a vertex in $L$. The {\em $s$-edge-degeneracy} of $G$, denoted by $\delta_{\rm e}^{s}(G)$ is the minimum $s$-edge-degeneracy over all layouts of $G$.

\paragraph{$(k,s)$-edge-hide-outs.}
 Let $s\in\Bbb{N}^{+}_{\geq 1}$ and $k\in\Bbb{N}$. A \emph{$(k,s)$-edge-hide-out} in a graph $G$ is any set
$R\subseteq V(G)$ such that, for every $x\in R$, ${\bf supp}_{G,s}(x,R\setminus\{
x\})\geq k$.  A \emph{$(k,s)$-edge-hide-out} $S$ is {\em maximal} there is no other \emph{$(k,s)$-edge-hide-out} $S'$ with $S\subsetneq S'$. It is easy to verify that every graph contains a unique 
maximal $(k,s)$-edge-hide-out.

$(k,s)$-edge-hide-outs can be seen as obstructions to small $s$-edge-degeneracy. In particular we prove the following min-max theorem, characterizing the search game that we defined in \autoref{magnetic}.
\begin{theorem}
\label{consiste}
Let $G$ be a graph and let $s\in\Bbb{N}^{+}_{\geq 1}$ and $k\in\Bbb{N}$.
The following three statements are equivalent.
\begin{enumerate}

\item[(1)] ${\bf cc}_{s}(G)\leq k$, i.e., there is a cop strategy $f$ on $G$ of cost less than $k$,  such
that for every robber strategy $R$ on $G$, $(f,R)$ is cop-winning.

\item[(2)] $G$ has no $(k+1,s)$-edge-hide-out.

\item[(3)] $\delta_{\rm e}^{s}(G)\leq  k$.
\end{enumerate}
\end{theorem}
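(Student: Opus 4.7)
The plan is to prove the cycle $(3)\Rightarrow(1)\Rightarrow(2)\Rightarrow(3)$, each implication being the natural translation between a layout, a search strategy, and an obstruction.

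For $(3)\Rightarrow(1)$, I would fix a layout $L=\langle v_1,\dots,v_r\rangle$ of $s$-edge-degeneracy at most $k$ and define the cop strategy $f$ by letting $f(v_i)$ be a minimum $(s,v_i,L_{\leq i-1})$-edge-separator, which has at most $k$ edges by hypothesis. When the robber sits at $v_i$ and the cops play $f(v_i)$, every $s$-path of $G\setminus f(v_i)$ leaving $v_i$ must end in $\{v_{i+1},\dots,v_r\}$, and since the robber cannot stay put, his index strictly increases each round. After at most $r$ rounds the robber reaches $v_r$; at this moment $f(v_r)$ separates $v_r$ from $V(G)\setminus\{v_r\}$, and since each edge incident to $v_r$ is itself a single-edge $s$-path, all of them must lie in $f(v_r)$, so the robber is captured.

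For $(1)\Rightarrow(2)$, I would argue contrapositively. Suppose $R$ is a $(k+1,s)$-edge-hide-out and $f$ is any cop strategy with ${\bf cost}(f)\leq k$. Whenever the robber stands at some $v\in R$, the set $f(v)$ cannot be an $(s,v,R\setminus\{v\})$-edge-separator, since its size is strictly less than ${\bf supp}_{G,s}(v,R\setminus\{v\})\geq k+1$; hence there is an $s$-path of $G\setminus f(v)$ from $v$ to some $u\in R\setminus\{v\}$ to which the robber moves. Starting at any $v_{\rm start}\in R$, this escape strategy keeps the robber inside $R$ forever, so no pair $(f,R)$ is cop-winning, contradicting ${\bf cc}_s(G)\leq k$.

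For $(2)\Rightarrow(3)$, I would build a layout by reverse induction. Setting $S_r=V(G)$, at each step $i=r,r-1,\dots,1$ the set $S_i\subseteq V(G)$ is, by hypothesis~(2), not a $(k+1,s)$-edge-hide-out of $G$, hence some $v_i\in S_i$ satisfies ${\bf supp}_{G,s}(v_i,S_i\setminus\{v_i\})\leq k$; pick such a $v_i$ and set $S_{i-1}=S_i\setminus\{v_i\}$. In the resulting layout $L=\langle v_1,\dots,v_r\rangle$ one checks that $L_{\leq i-1}=S_i\setminus\{v_i\}$ for every $i$, so every vertex has $s$-edge-support at most $k$, giving $\delta_{\rm e}^s(G)\leq k$.

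The only point that requires genuine care is the strict-progress argument in $(3)\Rightarrow(1)$: one must combine the blocking property of the chosen separator with the ``no staying put'' rule to conclude that the robber's layout index is monotonically increasing and therefore terminates at $v_r$. The remaining two implications are essentially direct unpackings of the definitions of ${\bf supp}_{G,s}$ and of $(k+1,s)$-edge-hide-out.
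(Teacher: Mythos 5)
Your proposal is correct and follows essentially the same route as the paper: all three implications are established with the same constructions (cop strategy from a layout, escape strategy from a hide-out, and the reverse-inductive greedy peeling), with the only cosmetic difference being that you traverse the cycle in the order $(3)\Rightarrow(1)\Rightarrow(2)\Rightarrow(3)$ instead of $(1)\Rightarrow(2)\Rightarrow(3)\Rightarrow(1)$. Your extra care in spelling out the termination of the game in $(3)\Rightarrow(1)$ (the robber's layout index strictly increases, and at $v_r$ the separator $f(v_r)$ necessarily contains all edges incident to $v_r$) is a welcome elaboration of a point the paper leaves terse.
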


\begin{proof}
{(\em 1\,)} $\Rightarrow$ {\em (2)}. We prove that the negation of {\em (2)} implies the 
negation of {\em (1)}.
Suppose that $S$ is a  $(k+1,s)$-edge-hide-out of $G$.
We use $S$ in order to build an escape strategy $R=(v_{\rm start},g)$ on $G$ as follows: Let $v_{\rm start}$ be any vertex in $S$. 
Let now $v\in S$ and $F\in 2^{E(G)}$. If $|F|>k$, then $g(v,F)=v$.
We next define $g(v,F)$ for every $F\in E(G)^{\leq k}$.
As $S$ is a  $(k+1,s)$-edge-hide-out of $G$,
we know that ${\bf supp}_{G,s}(v,S\setminus\{
v\})\geq k+1$, therefore there is an $s$-path from $v$ to some vertex $u\in S\setminus \{v\}$  that avoids all edges in $F$.  We define $g(v,F)=u$.
Notice now that if $f$ is a cop strategy on $G$ of cost at most $k$,
and $v_0,F_{1},v_{1},F_{2},v_{2},\ldots,$  is the game scenario generated by the pair $(f,R)$, then $v_{i-1}\neq v_{i}$ for every $i\in\Bbb{N}_{\geq 1}$. This means 
that $R$ is a robber-winning strategy against any cop strategy of cost at most $k$, therefore ${\bf cc}_{s}(G)\geq  k+1$.\medskip

\noindent {(\em 2\,)} $\Rightarrow$ {\em (3)}. 
Let $n=|V(G)|$.
As $G$ has no $(k+1,s)$-edge-hide-out, it follows 
that for every $R\subseteq V(G)$ there is a vertex $v\in R$,
such that ${\bf supp}_{G,s}(v,R\setminus\{
v\})\leq  k$. We pick such a vertex for every $R\subseteq V(G)$
and we denote it by $v(R)$. We now set $V_{n}=V(G)$, $v_{n}=v(V_{n})$, and
for $i\in \langle n-1,\ldots,1\rangle$ we set 
 $V_{i}=V_{i+1}\setminus \{v_{i+1}\}$, $v_{i}=v(V_{i})$.
 We now set $L=\langle v_{1},\ldots,v_{n}\rangle$ and 
observe that for every $i\in[n]$, 
${\bf supp}_{G,s}(v_i,L_{\leq i-1})={\bf supp}_{G,s}(v_i,V_{i-1})\leq k$.
Therefore, the $s$-edge-degeneracy of $L$ is at most $k$, hence $\delta_{\rm e}^{s}(G)\leq  k$.\medskip

\noindent {(\em 3\,)} $\Rightarrow$ {\em (1)}. 
Suppose now that $L=\langle v_{1},\ldots,v_{n}\rangle$ is a layout of $V(G)$
such that, for every $i\in[n]$, ${\bf supp}_{G,s}(v_i,L_{\leq i-1})\leq k$.
We use $L$ to build a cop strategy $f: V(G)\rightarrow 2^{E(G)}$ as follows. Let $i\in [n]$
and let $F_{i}$ be an $(s, v_{i}, L_{\leq i-1})$-edge-separator of $G$. We  define $f$  by setting 
$f(v_{i})=F_{i}$.
This means that if at some point the robber occupies vertex $v_{i}$,
then there is no $s$-path in $G\setminus F_{i}$ from $v_{i}$ to $L_{\leq i-1}$.
As a consequence of this, no matter what the robber strategy $R=(v_{\rm start},g)$ is, it should hold that
$g(v_{i},F_{i})\in L_{\geq i}$.
Therefore if $x_0,F_{1},x_{1},F_{2},x_{2},\ldots,$ is the {game scenario}  generated by the pair $(f,R)$, then $x_i=x_{i-1}$ for some $i<n$.
\end{proof}

\subsection{The complexity of $s$-edge-degeneracy, for distinct values of $s$}
\label{dsfsadfdsfds}

We now combine \autoref{writings} with the min-max theorem of the previous subsection in order to identify the computational 
complexity of $\delta_{\rm e}^{s}$ for different values of $s$.  Our main result is the following.

\begin{theorem}
\label{unguided}
If $s\in\{1,2,\infty\}$, then the problem that, given a graph $G$ and a $k\in \Bbb{N}$, asks whether $\delta_{\rm e}^{s}(G)\leq  k$,  
is polynomially solvable, while it is  {\sf  NP}-complete if $s\in\Bbb{N}_{\geq 3}$.
\end{theorem}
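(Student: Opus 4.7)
For the polynomial side $(s\in\{1,2,\infty\})$, the plan is to exploit the min-max equivalence (2)$\Leftrightarrow$(3) in \autoref{consiste}: deciding $\delta_{\rm e}^{s}(G)\leq k$ amounts to deciding whether $G$ admits a $(k+1,s)$-edge-hide-out. I would run the natural greedy peeling: initialize $R\leftarrow V(G)$ and, while some $v\in R$ satisfies ${\bf supp}_{G,s}(v,R\setminus\{v\})\leq k$, remove it from $R$. If $R$ empties, the reversal of the deletion order is a layout witnessing $\delta_{\rm e}^{s}(G)\leq k$; otherwise the stuck set $R$ is a $(k+1,s)$-edge-hide-out. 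This is exactly the layout construction used in the $(2)\Rightarrow(3)$ part of the proof of \autoref{consiste}.

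The whole procedure runs in polynomial time provided we can compute ${\bf supp}_{G,s}(v,T)$ in polynomial time. For $s=1$, ${\bf supp}_{G,1}(v,T)=|E_{G}(v,T)|$. For $s=\infty$, I would reduce to a minimum $v$-$t$ edge cut by introducing a fresh sink $t$ joined to each $u\in T$ by $|E(G)|+1$ parallel edges (so sink edges are too expensive to appear in any minimum cut); max-flow then returns ${\bf supp}_{G,\infty}(v,T)$. For $s=2$, each $2$-path from $v$ to $T$ either is a direct edge $vu$ with $u\in T$ or passes through a unique intermediate $w\in N_{G}(v)\setminus T$; a case analysis gives the closed form
\[
{\bf supp}_{G,2}(v,T)=|E_{G}(v,T)|+\sum_{w\in N_{G}(v)\setminus T}\min\bigl(|E_{G}(v,w)|,|E_{G}(w,T)|\bigr),
\]
which is computable in polynomial time. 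All three cases are consistent with the polynomial side of \autoref{writings}.

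For the hardness side $(s\in\Bbb{N}_{\geq 3})$, I would reduce from the problem of deciding ${\bf cut}_{G,s}(a,b)\leq k$, which is {\sf NP}-complete by \autoref{writings}. The anchor observation is ${\bf supp}_{G,s}(a,\{b\})={\bf cut}_{G,s}(a,b)$, so $\{a,b\}$ is a $(k+1,s)$-edge-hide-out iff ${\bf cut}_{G,s}(a,b)\geq k+1$. Given an instance $(G,a,b,k)$, I would build $G'$ by attaching padding to $G$ arranged so that (a) every vertex of $V(G')\setminus\{a,b\}$ has edge-degree at most $k$---hence cannot participate in any $(k+1,s)$-edge-hide-out, since ${\bf supp}_{G',s}(v,\cdot)\leq\deg_{G'}(v)$---and (b) ${\bf cut}_{G',s}(a,b)={\bf cut}_{G,s}(a,b)$. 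Under (a) and (b), $G'$ has a $(k+1,s)$-edge-hide-out iff $\{a,b\}$ is one iff ${\bf cut}_{G,s}(a,b)\geq k+1$, closing the reduction. Membership in {\sf NP} is easy: a certificate is a layout together with a size-$\leq k$ $s$-edge-separator at each step, each verifiable in polynomial time by enumerating $s$-paths (polynomial in $|V(G)|$ since $s$ is a fixed constant).

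The main obstacle is the padding construction. Naive degree-reducing substitutions (vertex splitting, subdivisions) generically lengthen paths and therefore shift ${\bf cut}_{G',s}(a,b)$ away from ${\bf cut}_{G,s}(a,b)$. A natural route is to replace high-degree vertices of $G$ by multigraph gadgets whose internal edges come with very high multiplicity, so that every minimum cut through the padding is forced to use only original $G$-edges, while carefully chosen short internal links preserve the $s$-distance between the external ``ports''. Verifying (a) and (b) simultaneously, and certifying that no $(k+1,s)$-edge-hide-out hides inside the padding, is the most delicate step. An alternative is to reduce from a restricted variant of the bounded-length cut problem where the input already satisfies $\deg_{G}(v)\leq k$ for $v\neq a,b$, provided that this restriction can be shown to remain {\sf NP}-complete.
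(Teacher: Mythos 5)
Your polynomial-time argument is essentially the paper's: greedy peeling (identical to the $(2)\Rightarrow(3)$ layout construction of \autoref{consiste}), with each ${\bf supp}_{G,s}$ computation delegated to \autoref{writings}. The extra detail you give for $s\in\{1,2,\infty\}$ is sound (in particular your closed-form for $s=2$ is correct, since for each intermediate $w\notin T$ the $2$-paths through $w$ form a complete bipartite incidence between $E_G(v,w)$ and $E_G(w,T)$, whose minimum vertex cover has size $\min(|E_G(v,w)|,|E_G(w,T)|)$, and these contributions are edge-disjoint across distinct $w$ and disjoint from $E_G(v,T)$). The paper instead simply identifies $T$ into a single vertex and calls the oracle of \autoref{writings} directly; either is fine.

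The hardness side, however, has a genuine gap: you correctly identify that \autoref{writings} is the right source problem and that the goal is to make $\{a,b\}$ the only candidate edge-hide-out, but the padding construction that would enforce (a) and (b) simultaneously is not given, and you yourself flag it as the delicate step. The obstruction you note is real: any local gadget that caps degrees at $k$ while preserving the exact value of a \emph{length-bounded} cut is hard to design because it perturbs $s$-distances, and ``high-multiplicity internal edges'' \emph{raise} degrees rather than lower them. The paper sidesteps this entirely by the opposite move: it does not reduce degrees below $k$, it \emph{raises the threshold}. Concretely, from $(G,a,b,k)$ with $n=|V(G)|$ it builds $G_{{\bf T}_s}$ by taking $k+n+1$ copies of $G$ glued at $a$, adding $n$ fresh vertices $c_1,\ldots,c_n$ each joined to every copy $b_j$ by an $s$-times-subdivided edge, and then proves $\delta^s_{\rm e}(G_{{\bf T}_s})\le k+n \iff {\bf cut}_{G,s}(a,b)\le k$. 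With threshold $k+n$, all ``interior'' vertices (degree $\le n-1$) and subdivision vertices (degree $2$) are automatically harmless, the $c_i$'s and $b_j$'s are forced into a hide-out exactly when the original $s$-cut is large, and no degree-reducing gadget is needed. Until you supply a concrete $G'$ satisfying both (a) and (b) (or establish NP-completeness of the bounded-degree restriction of length-bounded cut), your reduction does not go through.
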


\begin{proof}
Notice first that checking whether $\delta_{\rm e}^{s}(G)\leq k$ can be done by 
the algorithm \textbf{check $s$-edge degeneracy} in \autoref{narcotic}. Indeed, if the  
maximal  $(k+1,s)$-edge-hideout $S$ is non-empty then the above algorithm will report that  $\delta_{\rm e}^{s}(G)> k$ after visiting, in line 3, every vertex not in $S$, as, by  the maximality of $S$,  for every  $S'\supsetneq S$ there is a vertex $x\in S'\setminus S$ where  ${\bf supp}_{G,s}(x,S'\setminus\{x\})\leq k$. On the other hand, if $S$ is empty, then the procedure will produce 
a layout $L=\langle v_1,\dots,v_n\rangle$ with  $s$-edge-degeneracy at most $k$.

\begin{figure}[h]
{\small
\begin{tabbing}
{\bf Algorithm} \textbf{check $s$-edge degeneracy}\\[-3mm]
    {\sl Output:} \=\kill \\
    {\sl Input:}  \>a graph $G$ and an integer $k\in\Bbb{N}_{\geq 0}$. \\
    {\sl Output:} \>a report on whether $\delta^{s}_{\rm e}(G)\leq k$.\\[2mm]
    1. \=  $n\leftarrow |V(G)|$, $S\leftarrow V(G)$.                                               \\
    2. \> for \= $ i=n,\dots,1$,                                                \\
    3. \> \> if \=  there is an $x\in S$ with ${\bf supp}_{G,s}(x,S\setminus\{x\})\leq k$ then
        $v_{i}\leftarrow x$,                                          \\
     \> \> else \= report that ``$\delta^{s}_{\rm e}(G) > k$'' and {\bf stop}\\
    \> \> \>  /\!\!/ \= $S$ is the maximal $(k+1,s)$-edge-hideout of $G$,\\
    \>\>  \>  \> witnessing that $\delta^{s}_{\rm e}(G) > k$, because of \autoref{consiste}./\!\!/ \\
    4.  \> \>  $S\leftarrow S-v_{i}$.                                        \\
    5. \> Output ``$\delta^{s}_{\rm e}(G)\leq k$, witnessed by layout $L=\langle v_1,\dots,v_n\rangle.$''
\end{tabbing}
}\vspace{-4mm}
\caption{An algorithm checking whether $\delta^{s}_{\rm e}(G)\leq k$.}
\label{narcotic}
\end{figure}

Clearly, \textbf{check $s$-edge degeneracy} runs in polynomial time if checking whether ${\bf supp}_{G,s}(x,S\setminus\{x\})\leq k$ can be done in polynomial time, which is 
equivalent to  checking whether ${\bf cut}_{G’,s}(x,x’)\leq k$ where $G'$ is the graph obtained by $G$ after we identify all vertices of $S\setminus\{x\}$ to a single vertex $x'$.
As this is possible for $s\in\{1,2,\infty\}$, due to \autoref{writings}, the polynomial part of the theorem follow.
\medskip

It now remains to prove that checking whether $\delta_{\rm e}^{s}(G)\leq k$ is an {\sf NP}-hard problem when $s\in\Bbb{N}_{\geq 3}$.
For this we will reduce the problem of checking whether ${\bf cut}_{G,s}(a,b)\leq k$
to the problem of checking whether $\delta_{\rm e}^{s}(G)\leq  k$ and the result will follow from the hardness part of  \autoref{writings}.

Let ${\bf T}_{s}=(G,a,b,k)$ be a quadruple where $G$ is a graph on $n$ vertices,  $k\in \Bbb{N}_{\geq 0}$, and $a$, $b$ two distinct vertices of $G$. 
We construct the 
graph $G_{{\bf T}_{s}}$ as follows:  Take $k+n+1$ copies $G_{1},\ldots,G_{k+n+1}$
of $G$ and  identify all $a$'s of these copies to a single vertex that we call again $a$, while we set   $B:=\{b_{1},\ldots,b_{k+n+1}\}$ where $b_i$ is the copy of $b$ in $G_{i}$. Next, we add $n$ new vertices $C=\{c_{1},\ldots,c_{n}\}$
and, for every $(i,j)\in [n]\times [k+n+1]$,  we add the edge $e_{i,j}=c_ib_j$.
The construction of $G_{{\bf T}_{s}}$ is completed by subdividing each edge $e_{i,j}$ $s-1$ times. 

For every $(i,j)\in [n]\times [k+n+1]$, we denote by $P_{i,j}$ the $(c_i,b_j)$-$s$-path that replaces $e_{i,j}$ after this subdivision.
Also we set 
$${\cal P}_{j}=\{P_{i,j}\mid i\in [n]\}, \mbox{ for } j\in[k+n+1],$$ 
$${\cal Q}_{i}=\{P_{i,j}\mid j\in[k+n+1]\}, \mbox{ for } i\in[n],$$ and 
$P=\bigcup_{j\in[k+n+1]}{\cal P}_{j}$.\smallskip

For the correctness of the reduction, it remains to prove the following.
\begin{eqnarray}
\delta_{\rm e}^{s}(G_{{\bf T}_{s}})\leq k+n & \iff &  {\bf   cut}_{G,s}(a,b)\leq k\label{consists}
\end{eqnarray}
We first claim that, for every $j\in[k+n+1]$, 
\begin{eqnarray}
{\bf   cut}_{G,s}(a,b)=  {\bf  cut}_{G_{{\bf T}_{s}},s}(a,b_j).\label{appetite}
\end{eqnarray}
To see~\eqref{appetite} observe that none of the $(b_j,a)$-$s$-paths of $G_{{\bf T}_{s}}$
contains any vertex outside $G_{j}$, therefore ${\bf  cut}_{G_{{\bf T}_{s}},s}(a,b_j)={\bf  cut}_{G_{j},s}(a,b_j)={\bf   cut}_{G,s}(a,b)$.\medskip

We first prove the ($\Rightarrow$) direction of the \eqref{consists}.
For this we assume that  ${\bf   cut}_{G,s}(a,b)\geq  k+1$ and we show 
that $G_{{\bf T}_{s}}$ contains a $(k+n+1,s)$-edge-hide-out, which, by \autoref{consiste}, yields 
$\delta_{\rm e}^{s}(G_{{\bf T}_{s}})\geq  k+n+1$.
We claim that $S:=C\cup B\cup \{a\}$ is a   $(k+n+1,s)$-edge-hide-out of $G_{{\bf T}_{s}}$.  
As ${\bf   cut}_{G,s}(a,b)\geq  k+1\geq 1$, we know that for each $j\in[k+n+1]$ there
is a $(b_j,a)$-$s$-path, say  $R_{j}$, in $G_{{\bf T}_{s}}$ whose internal vertices 
are not vertices of any path in $P$. Moreover, every two paths in ${\cal R}:=\{R_{j}\mid j\in[k+n+1]\}$ have only one vertex, that is $a$ in common. The fact that $|{\cal R}|=k+n+1$ implies that  ${\bf cut}_{G_{{\bf T}_{s}},s}(a,B)\geq k+n+1$. Therefore, as  ${\bf cut}_{G_{{\bf T}_{s}},s}(a,S\setminus \{a\})\geq {\bf cut}_{G_{{\bf T}_{s}},s}(a,B),$ we have that 
\begin{eqnarray}
{\bf cut}_{G_{{\bf T}_{s}},s}(a,S\setminus \{a\})\geq k+n+1.\label{adorning}
\end{eqnarray}
Consider now the vertex $b_{j}$, for some $j\in[k+n+1]$, and notice that  that ${\bf cut}_{G_{{\bf T}_{s}},s}(b_j,W\cup \{a\})\geq {\bf  cut}_{G_{{\bf T}_{s}},s}(a,b_j)+|{\cal P}_{j}|$. Combining this with~\eqref{appetite} and the fact that $|{\cal P}_{j}|=n$, we obtain that 
${\bf cut}_{G_{{\bf T}_{s}},s}(b_j,W\cup \{a\})\geq
{\bf   cut}_{G,s}(a,b)+n\geq k+n+1$. As ${\bf cut}_{G_{{\bf T}_{s}},s}(b_j,S\setminus \{b_{j}\})\geq {\bf cut}_{G_{{\bf T}_{s}},s}(b_j,W\cup \{a\}),$ we have that 
\begin{eqnarray}
\forall j\in[k+n+1]\quad {\bf cut}_{G_{{\bf T}_{s}},s}(b_j,S\setminus \{b_{j}\})\geq k+n+1.\label{skeleton}
\end{eqnarray}

Consider now the vertex $c_{i}$, for some $i\in[n]$. Notice that ${\bf cut}_{G_{{\bf T}_{s}},s}(c_{i},B)\geq |{\cal Q}_{i}|=k+n+1$. As ${\bf cut}_{G_{{\bf T}_{s}},s}(c_{i},S\setminus \{c_{i}\})\geq {\bf cut}_{G_{{\bf T}_{s}},s}(c_{i},B)$ we obtain that 
\begin{eqnarray}
\forall i\in[n]\quad {\bf cut}_{G_{{\bf T}_{s}},s}(c_i,S\setminus \{c_{i}\})\geq k+n+1.\label{friesian}
\end{eqnarray}
It now follows from~\eqref{adorning},~\eqref{skeleton}, and~\eqref{friesian}, that $S$ is an $(k+n+1,s)$-edge-hide-out of $G_{{\bf T}_{s}}$, as required.
\medskip

We now prove the  ($\Leftarrow$) direction of ~\eqref{consists}. The assumption that ${\bf   cut}_{G,s}(a,b)\leq k$ implies that ${\bf  cut}_{G_{{\bf T}_{s}},s}(a,b_j)\leq k$, because of~\eqref{appetite}. Therefore there is a set $F_{j}$ of edges in $G_{i}$
that blocks every $(b_{j},a)$-$s$-path of $G_{{\bf T}_{s}}$.

Let $L=\langle v_{1},\ldots,v_{\ell}\rangle$ be any layout of the vertices of $G_{{\bf T}_{s}}$ where 
\begin{eqnarray}
L_{\leq k+2n+2}=\langle a,c_{1},\ldots,c_{n},b_{1},\ldots,b_{k+n+1}\rangle \label{granting}
\end{eqnarray}
In order to prove that  $\delta_{\rm e}^{s}(G_{{\bf T}_{s}})\leq k+n$ it suffices to show that, for each $h\in[\ell]$, 
\begin{eqnarray}
 {\bf supp}_{G,s}(v_h,L_{\leq h-1})  & \leq & k+n.\label{segments}
\end{eqnarray}

Notice that the  vertices of $L_{\leq k+2n+2}$ are the vertices of $S=W\cup B\cup\{a\}$.
As each other vertex $v\in V(G)\setminus S$, has degree at most $n-1$ in $G_{{\bf T}_{s}}$,
we directly have that \eqref{segments} holds when  $h\in [k+2n+3,\ell]$.
Let now $v_{h}=b_{j}$ for some $j\in[k+n+1]$. Let $F_{j}^*$ be the edges incident to $b_{j}$
that are edges of the paths in ${\cal P}_{j}$. Observe that $F_{j}\cup F_{J}^{*}$ blocks 
in $G_{{\bf T}_{s}}$ all the $s$-paths from $L_{\leq h-1}$ to $b_{j}$.
As all the edges in  $F_{j}\cup F_{J}^{*}$ have some endpoint in $L_{\geq h}$
and $|F_{j}|+|F_{j}^{*}|\leq k+n$, we conclude that  \eqref{segments} holds when  $h\in[n+2,k+2n+2]$.
Let now $v_{h}=c_{i}, i\in [n]$. Notice that the distance in $G_{{\bf T}_{s}}$ 
between $c_{i}$ and any vertex in $\{a\}\cup (W\setminus \{c_{i}\})$ is bigger than $s$, therefore 
${\bf supp}_{G,s}(v_h,L_{\leq h-1})\leq |F_{j}|+|F_{j}^{*}|\leq k+n$ and  \eqref{segments} holds when  $h\in[2,n+1]$. Finally \eqref{segments} holds trivially when $h=1$. This completes the proof of  ~\eqref{consists}, and the theorem follows.
\end{proof}

%
%
%
%

\section{A structural theorem for edge-admisibility}
\label{zfsdfsdfasdfsdafasd}

This section is dedicated to the statement and proof of our structural characterization for $\delta_{\rm e}^{\infty}$.
%
%
%
%

%
%
%


\subsection{Basic definitions}
\label{asdfsdasfdasdfsdgfdgdfshdghf}

\paragraph{Edge-admissibility}
The {\em $\infty$-admissibility} of a graph $G$  is the minimum $k$
for which there exists a layout $L=\langle v_{1},\ldots,v_{n}\rangle$ of $V(G)$ such that for every $i\in[n]$ there are at most  
$k$ vertex-disjoint, except for $v_i$, paths from $v_{i}$ to $L_{\leq i-1}$ in $G$. 
If in this definition we replace  ``vertex-disjoint'' by ``edge-disjoint'' (and we obviously drop the exception of $v_i$)
we have an edge analogue of the admissibility invariant that, because of Menger's theorem is the 
same invariant as $\delta_{\rm e}^{\infty}$. This encourages us to alternatively refer to $\delta_{\rm e}^{
\infty}(G)$ as the {\em $\infty$-edge-admissibility} of the graph $G$.

The purpose of this section is to give a structural characterization for graphs of bounded edge-admissibility. For this we need first a series of definitions. 
\paragraph{Immersions.}
Given a graph $G$ and two  incident edges $e$ and $f$  of $G$ (i.e., edges with a common endpoint) 
the result of {\em lifting $e$ and $f$ in $G$} is the graph obtained from $G$ after removing $e$ and $f$ and then adding the 
edge formed by the symmetric difference of $e$ and $f$. We say that a graph $H$ is an {\em immersion}
of a graph $G$, denoted by $H\leq G$,  if a graph isomorphic to $H$ can be obtained 
from some subgraph of $G$ after a series of liftings of incident edges. 
Given a graph $H$, we define the class of {\em $H$-immersion free graphs}
as  the class of all graphs that do not  contain $H$ as an immersion.

\paragraph{Edge sums.} 
Let $G_{1}$ and $G_{2}$ be graphs, let $v_{1},v_{2}$ be vertices of $V(G_{1})$ and $V(G_{2})$ respectively such that $k=\deg_{G}(v_{1})=\deg_{G}(v_{2})$, 
and consider a bijection $\sigma: E_{G_{1}}(v_{1})\rightarrow E_{G_{2}}(v_{2})$, 
where $E_{G_{1}}(v_{1})=\{e_{1}^{i}\mid i\in [k]\}$.  We define the {\em $k$-edge sum of 
$G_{1}$ and $G_{2}$ on $v_{1}$ and $v_{2}$, with respect to $\sigma$}, as the graph $G$ obtained if we take 
the disjoint union of $G_{1}$ and $G_{2}$, identify $v_{1}$ with $v_{2}$, and
then,  for each $i\in\{1,\ldots,k\}$, lift $e_{1}^{i}$ and $\sigma(e_{1}^{i})$ to 
a new edge $e^{i}$ and remove the vertex $v_{1}$.
We say that $G$ is a {\em $(\leq k)$-edge sum of $G_1$ and $G_{2}$} 
if either $G$ is the disjoint union of $G_{1}$ and $G_{2}$
or there is some $k'\in [k]$, two vertices $v_{1}$ and $v_{2}$, and a bijection $\sigma$ as above such that $G$ is the $k'$-edge sum of 
$G_{1}$ and $G_{2}$ on $v_{1}$ and $v_{2}$, with respect to $\sigma$.

\begin{figure}[ht]
\begin{center}
{\includegraphics[width=120mm]{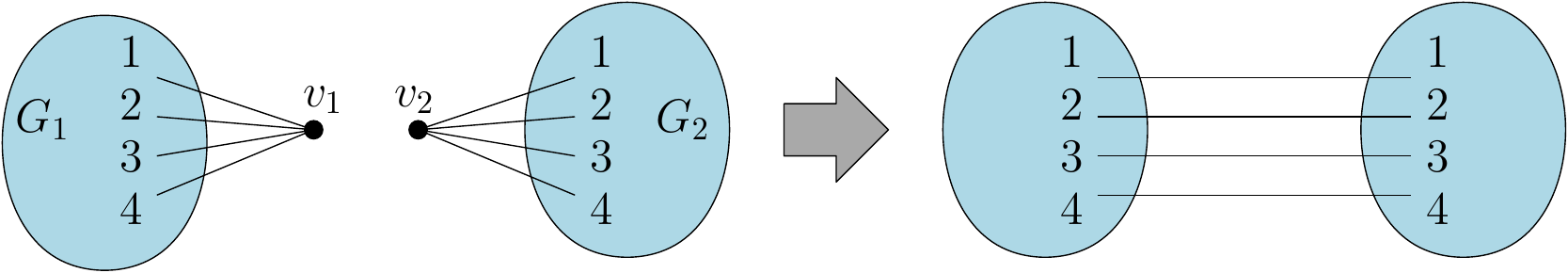}}
\end{center}
\caption{The graphs $G_{1}$ and $G_{2}$ and the graph created after the  edge-sum of $G_{1}$ and $G_{2}$.}
\label{pensions}
\end{figure}

Let ${\cal G}$ be some graph class. We recursively define the {\em $(\leq k)$-sum 
closure of} ${\cal G}$, denoted by  ${\cal G}^{(\leq k)}$,
as the set of graphs containing every graph $G\in {\cal G}$ that is the  $(\leq k)$-edge sum of two graphs $G_{1}$ and $G_{2}$ in ${\cal G}$ where $|V(G_{1})|,|V(G_{2})|< |V(G)|$.\medskip

A graph $G$ is {\em almost $k$-bounded edge-degree} if all its vertices, except possibly from one, have edge-degree at most $k$. We denote this class of graphs by ${\cal A}_{k}$.

The rest of this section is devoted to the proof of the the following result.

\begin{theorem}
\label{presents}
For every graph $G$ and $k\in \Bbb{N}_{\geq 0}$, if $G$ has edge-admissibility at most $k$, then $G$ can be constructed by  almost $k$-bounded edge-degree graphs after a series of $(\leq k)$-edge sums, i.e., $G\in {\cal A}_{k}^{(\leq k)}$.
Conversely, for every $k\in \Bbb{N}_{\geq 1}$, every graph in ${\cal A}_{k}^{(\leq k)}$ has edge-admissibility 
at most $2k-1$.
\end{theorem}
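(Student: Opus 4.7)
The proof has two parts, corresponding to the two statements of \autoref{presents}. For the forward part, assume $\delta_{\rm e}^{\infty}(G)\leq k$ and invoke the precise structural characterization announced above, namely that $G\in \mathcal{A}_k^{(\leq k)}$ if and only if every pair of distinct vertices of $G$ is separated by an edge-cut of size at most $k$. If some pair $(u,v)$ satisfied ${\bf cut}_{G,\infty}(u,v)\geq k+1$, then by Menger's theorem there would be $k+1$ pairwise edge-disjoint $u$--$v$ paths, so $\{u,v\}$ would be a $(k+1,\infty)$-edge-hide-out of $G$---which by \autoref{consiste} contradicts $\delta_{\rm e}^{\infty}(G)\leq k$. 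Hence every pair of vertices of $G$ has edge-cut at most $k$, and the structural characterization yields $G\in \mathcal{A}_k^{(\leq k)}$.

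For the converse, we prove by induction on the edge-sum construction the following strengthened claim: for every $G\in \mathcal{A}_k^{(\leq k)}$ and every $v\in V(G)$ with $\deg_G(v)\leq k$, there is a layout of $G$ starting with $v$ whose $\infty$-edge-admissibility is at most $2k-1$. The base case $G\in \mathcal{A}_k$ is handled by placing $v$ first, then the unique (if any) vertex of degree exceeding $k$, then the remaining vertices arbitrarily; each support is then bounded by the degree of the vertex at which it is computed, hence by $k$.

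For the inductive step, write $G$ as a $(\leq k')$-edge-sum of $G_1,G_2\in \mathcal{A}_k^{(\leq k)}$ along $v_1\in V(G_1)$ and $v_2\in V(G_2)$, with $k'\leq k$, and assume without loss of generality that $v\in V(G_1)\setminus\{v_1\}$. The inductive hypothesis produces a layout $L_1$ of $G_1$ starting with $v$ and a layout $L_2$ of $G_2$ starting with $v_2$ (valid since $\deg_{G_2}(v_2)=k'\leq k$), each of admissibility at most $2k-1$. Set $L:=(L_1\setminus\{v_1\})\circ(L_2\setminus\{v_2\})$, which still starts with $v$. The technical heart is the following cut-lifting lemma: for every $u\in V(G_1)\setminus\{v_1\}$ and every $T\subseteq V(G_1)\setminus\{v_1\}$, one has ${\bf supp}_{G,\infty}(u,T)\leq {\bf supp}_{G_1,\infty}(u,T)$. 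The proof starts from a minimum $(u,T)$-edge-cut $F$ of $G_1$ and builds $F^G\subseteq E(G)$ of the same size, by keeping the edges of $F$ not incident to $v_1$ and replacing each $v_1$-incident edge of $F$ by its corresponding lifted edge in $G$; one then checks that $F^G$ cuts $u$ from $T$ in $G$ by noting that any $u$-to-$T$ path of $G$ either lives entirely in $G_1-v_1$ (and is blocked exactly as in $G_1$), or traverses the $G_2$-gadget through lifted edges, in which case it corresponds to a walk from $u$ to $T$ through $v_1$ in $G_1$ that $F$ must block via an edge lifting to $F^G$.

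Combining the cut-lifting lemma with the monotonicity of ${\bf supp}$ in its target, each vertex $u_j$ of the $G_1$-portion of $L$ satisfies ${\bf supp}_{G,\infty}(u_j,L^{<j})\leq {\bf supp}_{G_1,\infty}(u_j,L_1^{<j'})\leq 2k-1$, where $j'$ is the position of $u_j$ in $L_1$. For the $G_2$-portion, we instead contract $V(G_1)\setminus\{v_1\}$ in $G$ to a single vertex; since contraction never decreases cuts and the resulting graph is isomorphic to $G_2$ (with the contracted vertex playing $v_2$), we obtain ${\bf supp}_{G,\infty}(w_j,L^{<j})\leq {\bf supp}_{G_2,\infty}(w_j,L_2^{<j'})\leq 2k-1$, crucially using that $v_2$ is first in $L_2$ so that the contracted vertex always lies in $w_j$'s prior set in $L_2$. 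The main obstacle is the careful verification of the cut-lifting lemma, since paths of $G$ may enter and exit the $G_2$-gadget several times and each such crossing must be matched with a visit to $v_1$ in the corresponding $G_1$-walk via the bijection $\sigma$ between the $v_1$-incident edges of $G_1$ and the lifted edges of $G$; once this correspondence between pairs of lifted edges and pairs of $v_1$-incident edges is set up correctly, the lemma---and hence the announced bound $2k-1$---follows.
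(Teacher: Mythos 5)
Your forward direction matches the paper's own: the paper combines Observation~\ref{conceive} with the ``only if'' part of Theorem~\ref{furrowed}, which is precisely the Menger-plus-Theorem~\ref{consiste} argument you give, just unpacked. The converse, however, follows a genuinely different route. The paper first passes to $\theta_{k+1}$-immersion-freeness (the ``if'' part of Theorem~\ref{furrowed}) and then proves Lemma~\ref{holiness} by a global argument: from a hypothetical $(2k,\infty)$-edge-hide-out it builds a rooted carving decomposition by repeatedly bisecting hide-out vertices with $(\le k)$-cuts and then greedily descends a root-to-leaf path along which every edge weight stays at most $2k-1$. You bypass the immersion reformulation and instead induct directly on the $(\le k)$-edge-sum decomposition, carrying the strengthened hypothesis that the layout can be anchored at any prescribed vertex of degree at most $k$, and transferring support bounds from the summands to $G$ via your cut-lifting lemma on the $G_1$ side and a contraction argument on the $G_2$ side. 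Both proofs are correct; yours is more explicit about how the decomposition turns into an actual layout. What your route buys is, in fact, a sharper conclusion: your base case already yields supports at most $k$, and both your transfer steps are lossless (no additive or multiplicative slack), so the same induction establishes $\eadm(G)\le k$ for every $G\in\mathcal A_k^{(\le k)}$ rather than just $\le 2k-1$. That improvement is genuine (one can also see it via a Gomory--Hu tree, in the spirit of the remark following Theorem~\ref{furrowed}), whereas the paper's root-to-leaf descent inherently leaks a factor of two. One small omission to repair: to derive the final inequality from your strengthened inductive claim you still need that every $G\in\mathcal A_k^{(\le k)}$ with $|V(G)|\ge 2$ actually has a vertex of degree at most $k$; this is true (it follows from the recursive definition of $\mathcal A_k^{(\le k)}$, or from $\theta_{k+1}\not\le G$), but your write-up invokes it silently when choosing the starting vertex $v$.
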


%

%
%
%
%
%

\subsection{A structural characterizations of $\theta_{k}$-immersion free graphs}

Recall that given a $k\in\Bbb{N}_{\geq 1}$, $\theta_{k}$ is the graph with two vertices and $k$ parallel edges between them. In this subsection we prove that $\theta_{k}$-immersion free graphs are exactly the graphs in ${\cal A}_{k}^{(\leq k)}$ (\autoref{furrowed}).\medskip

 We need some more definitions in order to 
translate edge-sums to their decomposition equivalent that will be more easy to handle.

\paragraph{Tree-partitions.}
A {\em tree-partition} of a graph $G$  is a pair ${\cal D}=(T,{\cal B})$ where $T$ is 
a tree and ${\cal B}=\{B_{t}\mid t\in V(T)\}$ is a near-partition of $V(G)$. We refer to the sets 
in ${\cal B}$ as the {\em bags} of ${\cal D}$.
Given a tree-partition ${\cal D}=(T,{\cal B})$ of $G$ and an edge $e\in E(T)$, we define ${\bf  cross}_{\cal D}(e)=E_{G}(V_{1},V_{2})$,  where $V_{i}=\bigcup_{t\in V(T_{i})}B_{t}$, for $i\in[2]$
and $T_{1}$ and $T_{2}$ are the two connected components of $T\setminus e$.

\begin{figure}[ht]
\begin{center}
{\includegraphics[width=14cm]{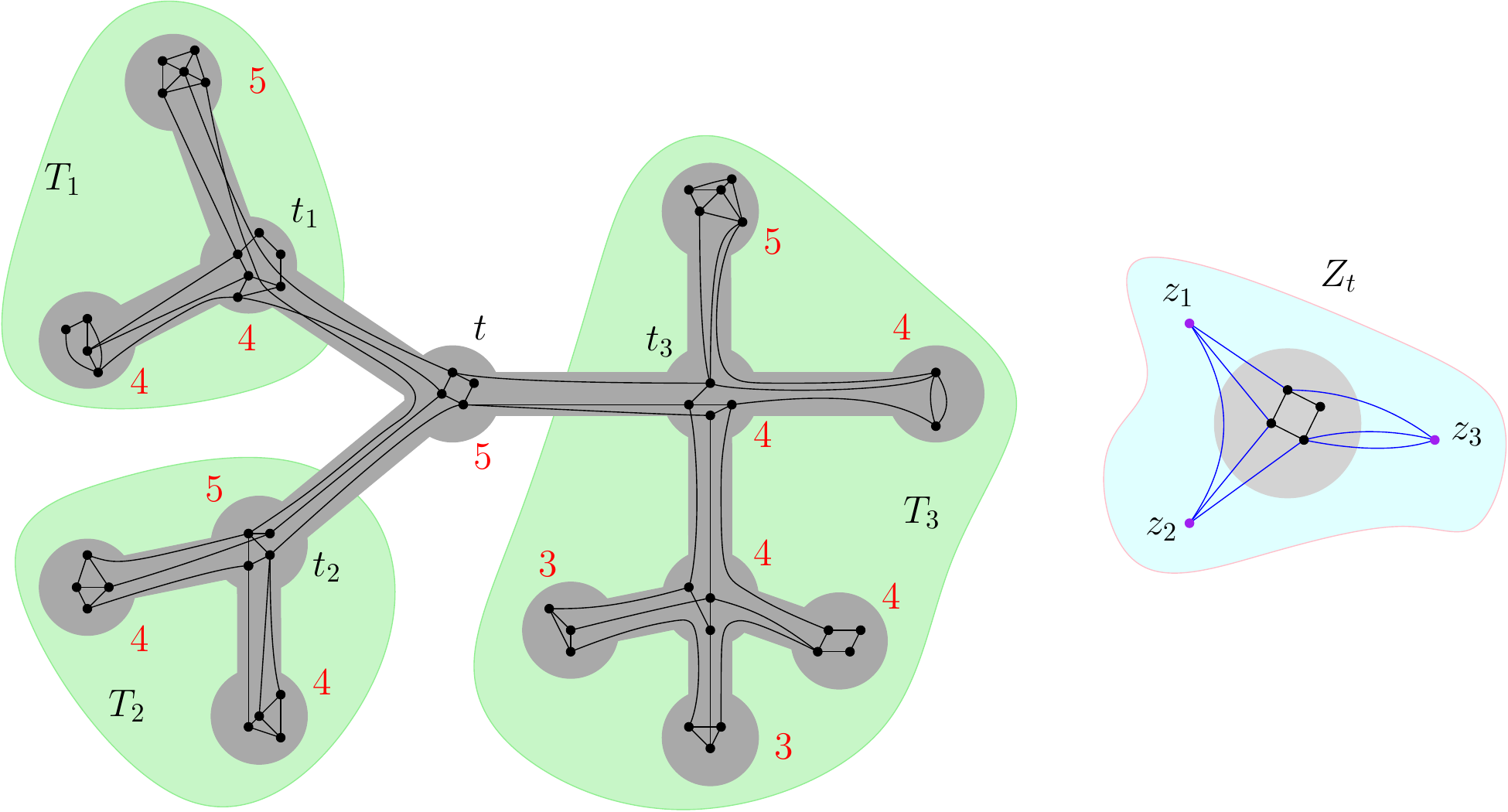}}
\end{center}
\caption{A graph $G$, a tree-partition of $G$ with  adhesion 3, and the torso $Z_{t}$ of the vertex $t$.}
\label{pensasions}
\end{figure}

For each $t\in V(T)$, we define the $t$-torso of ${\cal D}$  as  follows: Let $T_{1},\ldots,T_{q_t}$ be the connected components
of $T\setminus t$ and  let $t_{1},\ldots,t_{q_{t}}$ be the neighbors of $t$ in $T$ such that $t_{i}\in V(T_{i})$. We set  $\bar{B}_{i}=\bigcup_{h\in V(T_{i})}B_{h}$, for $i\in[q_{t}]$.
Νext, we define the graph $Z_{i}$ as the graph obtained from $G$
if, for every $i\in[q_{t}],$ we identify all the vertices of $\bar{B}_{i}$
to a single vertex $z_{i}$ (maintaining the multiple edges created after such an identification). We call  $Z_{t}$  {\em the $t$-torso of ${\cal D}$} or, simply {\em a torso} of ${\cal D}$. We call the new vertices $z_{1},\ldots,z_{q_{t}}$ {\em satellites} of the torso $Z_{t}$. 
For each $i\in[q_{t}]$,  we say that $z_{i}$ {\em represents} the vertex $t_{i}$ in $T$
and {\em subsumes} the connected component $T_{i}$ of $T\setminus t$.
For an example of a tree-partition, see~\autoref{pensasions}.\smallskip

Let ${\cal D}=({\cal B},T)$ be a tree-partition of a graph $G$.
The {\em adhesion} of ${\cal D}=(T,{\cal B})$ is  $ \max\{|{\bf  cross}_{\cal D}(e)|\mid e\in E(T)\}$ (the adhesion of the tree-partititon of \autoref{pensasions} is 3).
The {\em strength} of ${\cal D}=(T,{\cal B})$ is  $\min\{\Delta(Z_{t})\mid t\in V(T)\}$
(in the tree-partititon of \autoref{pensasions} the red numbers are the values of $\Delta(Z_{t})$ for each node of the tree $T$).

Observe that if ${\cal D}$ has strength at least $k+1$, then 
every torso of ${\cal D}$ contains a vertex of degree at least $k+1$.

Notice that each graph $G$, where $\Delta(G)\leq k$, has a tree-partition $(T,{\cal B})$ where both adhesion and strength are at  most $k$: let $T$ be a star with center $r$ and $|V(G)|$ leaves $\ell_{1},\ldots,\ell_{|V(G)|}$, consider a numbering $v_{1},\ldots,v_{|V(G)|}$ of $V(G)$, 
and then set $B_{r}=\emptyset$, while $B_{\ell_{i}}=\{v_{i}\}, i\in[|V(G)|]$.
\medskip

The next observation follows directly from the definitions and provides a ``translation'' of edge-sums 
in terms of tree-partitions. 

\begin{observation}
\label{teaching}
Let ${\cal G}$ be a graph class and let $k\in \Bbb{N}$. The
class ${\cal G}^{(\leq k)}$ contains exactly the graphs that have a tree-partition
of adhesion at most $k$ whose torsos are graphs in ${\cal G}$.
\end{observation}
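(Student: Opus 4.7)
The statement claims two inclusions, and I would prove each by induction. For the forward inclusion (every $G \in {\cal G}^{(\leq k)}$ admits such a tree-partition), I would induct on the number of $(\leq k)$-edge sums used to build $G$, or equivalently on $|V(G)|$. For the reverse inclusion, I would induct on $|V(T)|$, peeling off a leaf of $T$ to express $G$ as an edge sum of two smaller graphs already in ${\cal G}^{(\leq k)}$.

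For the forward direction, if $G \in {\cal G}$ then the one-node tree-partition $(T,{\cal B})$ with $V(T)=\{t\}$ and $B_t = V(G)$ has adhesion $0$ and its only torso is $G$ itself. Otherwise, $G$ is the $k'$-edge sum, $k'\leq k$, of two graphs $G_1,G_2\in {\cal G}^{(\leq k)}$ at vertices $v_1,v_2$ with respect to some bijection $\sigma$, and by induction each $G_i$ has a tree-partition $(T_i,{\cal B}_i)$ with adhesion at most $k$ and torsos in ${\cal G}$. Letting $t_i\in V(T_i)$ be the node whose bag contains $v_i$, I would glue the two tree-partitions into $(T,{\cal B})$ by taking $T:=T_1\cup T_2$ together with the new tree edge $t_1t_2$, and by setting $B_t := B_t^{{\cal B}_i}\setminus\{v_1,v_2\}$ for every $t\in V(T_i)$. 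The crossing of $t_1t_2$ consists exactly of the $k'$ lifted edges, while the crossing of every edge $f\in E(T_i)$ in the new tree-partition has the same size as it had in ${\cal D}_i$: by the definition of the lift, the bijection $\sigma$ puts the cross edges incident to $v_i$ on the ``wrong'' side of $f$ in one-to-one correspondence with the new lifted edges crossing $f$. The $t_1$-torso of ${\cal D}$ is isomorphic to the $t_1$-torso of ${\cal D}_1$, with the satellite representing the component containing $T_2$ playing the role formerly played by $v_1$; the torsos at all other nodes of $T_1$ remain unchanged, and symmetrically for $T_2$, so all torsos stay in ${\cal G}$.

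For the reverse direction, let $(T,{\cal B})$ be a tree-partition of $G$ of adhesion at most $k$ with all torsos in ${\cal G}$. If $|V(T)|=1$, then $G$ equals its unique torso and lies in ${\cal G}$. Otherwise I would pick a leaf $\ell$ with neighbor $r$ and set $G_2$ to be the $\ell$-torso of ${\cal D}$, whose satellite $z_r$ has degree $|{\bf cross}_{\cal D}(\ell r)|=k'\leq k$; by assumption $G_2\in{\cal G}\subseteq{\cal G}^{(\leq k)}$. Construct $G_1$ from $G$ by identifying all of $B_\ell$ into a single vertex $v$ of degree $k'$; equivalently, $G_1$ is $G[V(G)\setminus B_\ell]$ together with $v$ and, for each cross edge of ${\bf cross}_{\cal D}(\ell r)$, a new edge from $v$ to the non-$B_\ell$ endpoint. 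Placing $v$ in the bag of $r$ gives a tree-partition of $G_1$ on $T\setminus\ell$ whose adhesion is still at most $k$ and whose torsos coincide with those of ${\cal D}$ at the nodes of $T\setminus\ell$, so by induction $G_1\in{\cal G}^{(\leq k)}$. The canonical bijection that pairs each edge of $v$ with its originating cross edge (and hence with the corresponding edge of $z_r$ in $G_2$) then exhibits $G$ as the $k'$-edge sum of $G_1$ and $G_2$ on $v$ and $z_r$, so $G\in{\cal G}^{(\leq k)}$.

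The main obstacle in both directions is the bookkeeping needed to verify that torsos are preserved under the correspondence between edge sums and leaf-attachments: one must check that the lifted edges occurring in the edge-sum definition are exactly the edges of $G$ incident to the satellite vertex that materialises when a new leaf is attached, and vice versa. Once this correspondence is pinned down via $\sigma$, the adhesion and torso conditions follow directly; everything else is routine.
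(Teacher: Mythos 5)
Your two-way induction is correct and is exactly the routine verification the paper leaves implicit when it says the observation ``follows directly from the definitions'': gluing the two tree-partitions along a new tree edge for the forward direction, and splitting off a leaf torso as an edge-sum summand for the converse, with the adhesion and torso bookkeeping handled as you describe. The only point you gloss over is the degenerate case where the $(\leq k)$-edge sum is a disjoint union (equivalently, a crossing of size $0$), in which a torso acquires an isolated satellite vertex; this wrinkle is inherent in the paper's own formulation and is harmless for the class ${\cal A}_{k}$ to which the observation is applied, since that class is closed under adding vertices of degree $0$.
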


\begin{lemma}
\label{immature}
Let $k\in\Bbb{N}_{\geq 0}$ and let $G$ be a graph and ${\cal D}=(T,{\cal B})$ be a tree-partition of $G$ of adhesion at most $k$. If $\theta_{k+1}\leq G$, then 
there is a $t\in V(T)$ such that  $\theta_{k+1}\leq Z_{t}$.
\end{lemma}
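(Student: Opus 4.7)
Plan. The approach is to use Menger's theorem to rephrase the hypothesis: $\theta_{k+1}\leq H$ for distinct vertices $u,v$ of a graph $H$ is equivalent to the statement that every $(u,v)$-cut of $H$ has size at least $k+1$. Assume $\theta_{k+1}\leq G$ and fix a witness pair $u,v\in V(G)$ together with $k+1$ pairwise edge-disjoint $(u,v)$-paths. I proceed in two steps: (a) show that $u$ and $v$ lie in a common bag $B_t$ of ${\cal D}$; (b) show that the minimum $(u,v)$-cut in $Z_t$ is still at least $k+1$, whence $\theta_{k+1}\leq Z_t$ by Menger's theorem.

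For (a), suppose $u\in B_{t_1}$ and $v\in B_{t_2}$ with $t_1\neq t_2$, and let $f$ be any edge on the tree path $t_1 T t_2$. Removing $f$ splits $T$ into two subtrees $T_1\ni t_1$ and $T_2\ni t_2$, and the bag-unions $V_i=\bigcup_{t\in V(T_i)}B_t$ form a $(u,v)$-bipartition of $V(G)$ with $E_G(V_1,V_2)={\bf cross}_{{\cal D}}(f)$ of size at most $k$. This contradicts the edge-connectivity hypothesis, so $t_1=t_2=:t$ and $u,v\in B_t$.

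For (b), let $(X,\overline{X})$ be any bipartition of $V(Z_t)$ with $u\in X$ and $v\in \overline{X}$; I lift it to $G$ by setting
\[
Y\;=\;(X\cap B_t)\;\cup\;\bigcup_{j\,:\,z_j\in X}\bar{B}_j,
\]
which satisfies $u\in Y$ and $v\notin Y$. The identification map $G\to Z_t$ (collapsing each $\bar{B}_j$ to its satellite $z_j$) induces a bijection between edges of $G$ that are not mapped to loops and edges of $Z_t$. The definition of $Y$ is tailored so that, for every such edge, its two endpoints lie on the same side of $(X,\overline{X})$ in $Z_t$ iff they lie on the same side of $(Y,V(G)\setminus Y)$ in $G$; in particular the edges internal to a single $\bar{B}_j$ become loops at $z_j$ and cross neither cut. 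Hence $|E_{Z_t}(X,\overline{X})|=|E_G(Y,V(G)\setminus Y)|\geq k+1$, and Menger's theorem yields $\theta_{k+1}\leq Z_t$.

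The pigeonhole-style argument in (a) is routine; the substance lies in (b), namely the edge-by-edge verification, across the four configurations (both endpoints in $B_t$; one endpoint in $B_t$ and the other in some $\bar{B}_j$; both endpoints in the same $\bar{B}_j$; endpoints in two distinct $\bar{B}_j$ and $\bar{B}_{j'}$), that the lift $Y$ preserves cut sizes exactly. Once this case analysis is in place, the conclusion of the lemma is immediate.
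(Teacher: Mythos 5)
Your proposal is correct. Step (a) matches the paper's own argument for placing both branch vertices of the $\theta_{k+1}$-immersion in a common bag: any tree edge $f$ on the path in $T$ between their bags would give a $(u,v)$-cut ${\bf cross}_{\cal D}(f)$ of size at most $k$, contradicting the existence of $k+1$ pairwise edge-disjoint $(u,v)$-paths. Where you diverge is step (b). You work on the cut side of Menger's theorem: you lift any $(u,v)$-cut $(X,\overline{X})$ of $Z_t$ to a $(u,v)$-cut of $G$ of exactly the same size (via the bag-respecting preimage $Y$), conclude that every $(u,v)$-cut of $Z_t$ has size at least $k+1$, and then apply Menger a second time to recover $k+1$ edge-disjoint paths in $Z_t$. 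The paper's argument is the primal dual of this: it takes the $k+1$ edge-disjoint $(x,y)$-paths in $G$ and contracts, in each $P_i$, every maximal subpath lying wholly inside a single $\bar{B}_j$ down to the satellite $z_j$, producing directly $k+1$ pairwise edge-disjoint $(x,y)$-paths in $Z_t$. The paper's route is constructive and does not need the max-flow/min-cut direction of Menger at all; yours trades that construction for the clean edge-by-edge identity $|E_{Z_t}(X,\overline{X})|=|E_G(Y,V(G)\setminus Y)|$, which is perhaps easier to check rigorously since it avoids verifying that the contracted subpaths really appear consecutively in each $P_i$. Both arguments are valid and about equally short.
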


\begin{proof}
Observe that if  $\theta_{k+1}\leq G$,then there are two vertices $x$ and $y$ in $G$ 
that are connected by $k+1$ pairwise edge-disjoint $(x,y)$-paths, $P_{1},\ldots,P_{k+1}$ in $G$. As ${\cal D}$ has adhesion at most $k$, there is some $t\in V(T)$ such that $x,y\in B_{t}$.
Let $T_{1},\ldots,T_{q_t}$ be the connected components
of $T\setminus t$ and let  $z_{1},\ldots,z_{q_{t}}$ be the satelites of the $t$-torso $Z_{t}$ of ${\cal D}$.
Let $i\in[k]$ and notice that, among the edges of  the $(x,y)$-path $P_{i}$, those 
missing from  $Z_{t}$ are those that do not have endpoints in $B_{t}$.
Notice also that for every $j\in[q_{t}]$ the edges of $P_{i}$
with both endpoints in $\bigcup_{t'\in V(T_{j})}B_{t'}$ appear as consecutive edges in $P_{i}$. We now contract each such set of edges to the vertex $z_{j}$ for each $j\in[q_{t}]$
and observe that the resulting path $P_{i}'$ is a path of $Z_{t}$.
Observe that 
$P_{1}',\ldots,P_{k+1}'$ are pairwise edge-disjoint $(x,y)$-paths of $Z_{t}$ and  we conclude that 
$\theta_{k+1}\leq Z_{t}$ as required.
\end{proof}

Let ${\cal D}=(T,{\cal B})$ be a tree-partition of a graph $G$ and $k\in\Bbb{N}_{\geq 0}$. 
We say that a torso $Z_{t}$ of ${\cal D}$ is 
\begin{itemize}
\item  {\em $k$-splitable}: if it contains 
a   cut $(X,\overline{X})$ of size smaller than or equal to  $k$ where both 
$X$ and $\overline{X}$ contain  some vertex of degree at least $k+1$. 
\item {\em $k$-overloaded}: if at least two of its vertices have degree at least $k+1$.
\end{itemize}

Given a tree-partition ${\cal D}=(T,{\cal B})$, we define  $${\bf w}({\cal D})=\sum_{t\in V(T)}({\bf s}_{\cal D}(t)-1)$$ where ${\bf s}_{\cal D}(t)$ is the number of vertices in $B_{t}$ that have degree at least $k+1$. 
\begin{observation}
\label{daylight}
Let $G$ be a graph, $k\in\Bbb{N}_{\geq 0}$, and ${\cal D}$ be  a tree-partition of $G$
that  has strength at least $k+1$. Then ${\bf w}({\cal D})>0$ iff some of its torsos are $k$-overloaded.
\end{observation}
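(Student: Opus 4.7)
The plan is a direct pigeonhole-style argument on the definition ${\bf w}({\cal D}) = \sum_{t\in V(T)}({\bf s}_{\cal D}(t)-1)$. The whole proof hinges on first showing that every summand is non-negative. Since ${\cal D}$ has strength at least $k+1$, we have $\Delta(Z_t)\geq k+1$ for every $t\in V(T)$ (this is the remark stated immediately before the observation), so each torso contains at least one vertex of degree at least $k+1$. By the defining property of ${\bf s}_{\cal D}(t)$, this gives ${\bf s}_{\cal D}(t)\geq 1$ and hence ${\bf s}_{\cal D}(t)-1\geq 0$ for every $t$.

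Once this non-negativity is in hand, both directions of the equivalence are immediate. For the $(\Rightarrow)$ direction, I would argue that if ${\bf w}({\cal D})>0$ then, because every summand is $\geq 0$, some $t^{\star}\in V(T)$ must satisfy ${\bf s}_{\cal D}(t^{\star})\geq 2$; unpacking the definition, this says $Z_{t^{\star}}$ contains at least two vertices of degree $\geq k+1$, i.e., $Z_{t^{\star}}$ is $k$-overloaded. For the $(\Leftarrow)$ direction, suppose some torso $Z_{t^{\star}}$ is $k$-overloaded, so ${\bf s}_{\cal D}(t^{\star})\geq 2$; its contribution to the sum is at least $1$, while all other summands are $\geq 0$ by the preliminary observation, so ${\bf w}({\cal D})\geq 1>0$.

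There is essentially no real obstacle here: the argument is a one-line pigeonhole combined with a direct unpacking of the definitions of ${\bf s}_{\cal D}$ and $k$-overloaded. The only subtlety worth emphasizing is that the strength hypothesis is used precisely at the point where it matters, namely to rule out bags with ${\bf s}_{\cal D}(t)=0$ whose negative contribution $-1$ could otherwise cancel the positive contribution of an overloaded torso.
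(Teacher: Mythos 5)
Your pigeonhole strategy is the right one, and the paper does not supply its own proof of this statement (it is labelled an Observation and left unproved), so the only thing to assess is correctness. There is a genuine gap in the key inference. A torso $Z_t$ has vertex set $B_t\cup\{z_1,\ldots,z_{q_t}\}$, where the $z_i$ are the satellite vertices, and ${\bf s}_{\cal D}(t)$ by definition counts only the vertices of $B_t$ of degree at least $k+1$, not the satellites. The strength hypothesis tells you that each $Z_t$ has \emph{some} vertex of degree at least $k+1$, but that vertex can be a satellite $z_i$, whose degree in $Z_t$ equals $|{\bf cross}_{\cal D}(tt_i)|$, a quantity the observation's hypotheses do not bound. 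So ``strength $\geq k+1$'' does not by itself give ${\bf s}_{\cal D}(t)\geq 1$. The same issue infects the $(\Leftarrow)$ direction: a $k$-overloaded torso has two vertices of $Z_t$ of degree at least $k+1$, but these too may include satellites, so you cannot jump to ${\bf s}_{\cal D}(t^{\star})\geq 2$. A concrete counterexample to the observation as literally stated: take $G$ the star with center $c$ and $n$ leaves, $T$ a single edge $t_1t_2$ with $B_{t_1}$ the set of leaves and $B_{t_2}=\{c\}$, and $k=n-1$; then the strength equals $n=k+1$, the torso $Z_{t_2}\cong\theta_n$ is $k$-overloaded, yet ${\bf s}_{\cal D}(t_1)=0$ and ${\bf s}_{\cal D}(t_2)=1$, so ${\bf w}({\cal D})=-1$.

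The missing ingredient is the adhesion bound. This observation is only ever invoked, in the proof of Lemma~\ref{connects}, for $k$-tight tree-partitions, which by definition also have adhesion at most $k$. Under that extra hypothesis every satellite $z_i$ of $Z_t$ has $\deg_{Z_t}(z_i)=|{\bf cross}_{\cal D}(tt_i)|\leq k$, so every vertex of $Z_t$ of degree at least $k+1$ actually lies in $B_t$. Once you record this remark at the top of your proof, both of your inferences (``strength $\geq k+1$ implies ${\bf s}_{\cal D}(t)\geq 1$'' and ``$Z_t$ $k$-overloaded implies ${\bf s}_{\cal D}(t)\geq 2$'') become valid and your pigeonhole argument closes cleanly.
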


Given a $k\in\Bbb{N}_{\geq 0}$, we say that a tree-partition 
${\cal D}=({\cal B},T)$ is {\em $k$-tight} if, its adhesion is at most $k$ and 
its strength is at least $k+1$.

\begin{lemma}
\label{dispense}
For every graph $G$ and $k\in\Bbb{N}_{\geq 0}$, if ${\cal D}$ is a $k$-tight tree-partition of $G$ with a $k$-splittable torso, then there is  a  $k$-tight tree-partition ${\cal D}'$ of $G$ where ${\bf w}({\cal D}')<{\bf w}({\cal D})$.\end{lemma}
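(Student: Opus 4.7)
The plan is to refine ${\cal D}$ locally at the $k$-splittable torso. Let $Z_t$ be a $k$-splittable torso of ${\cal D}$ and let $(X,\overline{X})$ be a cut of $Z_t$ of size at most $k$ in which both sides contain a vertex of degree at least $k+1$. Set $X_0 := X\cap B_t$ and $\overline{X}_0 := \overline{X}\cap B_t$; note that every satellite $z_i$ of $Z_t$ lies entirely in $X$ or in $\overline{X}$. I would build ${\cal D}'=(T',{\cal B}')$ by replacing the node $t$ of $T$ with a new edge $t_1t_2$, reattaching each former neighbor $t_i$ of $t$ to $t_1$ or $t_2$ according to whether $z_i\in X$ or $z_i\in\overline{X}$, and assigning $B_{t_1}=X_0$, $B_{t_2}=\overline{X}_0$, while leaving every other bag unchanged.

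Next, I would verify the three required properties. For the adhesion: every tree edge of $T'$ other than $t_1t_2$ corresponds to a tree edge of $T$ whose two sides still subsume the same union of bags, so its crossing multiset is unchanged; the new edge $t_1t_2$ has crossing equal to $E_G(V_1,V_2)$ with $V_1=X_0\cup\bigcup_{z_i\in X}\bar{B}_i$ and $V_2=\overline{X}_0\cup\bigcup_{z_i\in\overline{X}}\bar{B}_i$, and this set is in natural bijection with the edge set of the cut $(X,\overline{X})$ of $Z_t$ (via the torso identification), hence has size at most $k$. For the strength: any satellite $z_i$ of $Z_t$ has degree at most $k$ in $Z_t$ (its incident edges are exactly the crossing edges of $tt_i$ in ${\cal D}$, whose count is at most the adhesion $k$), so the two high-degree witnesses in $X$ and $\overline{X}$ necessarily lie in $B_t$, and thus in $X_0$ and $\overline{X}_0$ respectively; this makes both $Z_{t_1}$ and $Z_{t_2}$ have maximum degree at least $k+1$. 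All other torsos are unchanged, so the strength of ${\cal D}'$ remains at least $k+1$, and ${\cal D}'$ is $k$-tight.

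Finally, for the weight decrease, let $a$ and $b$ denote the number of vertices of degree at least $k+1$ in $X_0$ and $\overline{X}_0$ respectively; by the argument above, $a\geq 1$ and $b\geq 1$. The contribution of $t$ to ${\bf w}({\cal D})$ is ${\bf s}_{\cal D}(t)-1=a+b-1$, while the contributions of $t_1$ and $t_2$ to ${\bf w}({\cal D}')$ sum to $(a-1)+(b-1)=a+b-2$. Contributions of all other nodes are identical, so ${\bf w}({\cal D}')={\bf w}({\cal D})-1<{\bf w}({\cal D})$. The main technical subtlety is the bijection between edges in the cut of $Z_t$ and edges of $G$ crossing the new tree edge $t_1t_2$: one must carefully account for $G$-edges between two distinct subsumed components $\bar{B}_i$ and $\bar{B}_j$, which appear as parallel satellite-to-satellite edges $z_iz_j$ in $Z_t$ and contribute to the crossing of $t_1t_2$ exactly when $z_i$ and $z_j$ lie on opposite sides of $(X,\overline{X})$.
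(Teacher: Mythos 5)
Your proof is correct and follows essentially the same approach as the paper's: replace the node $t$ by an edge $t_1t_2$, split $B_t$ along the given cut $(X,\overline{X})$ of $Z_t$, redistribute the subtrees according to which side each satellite falls on, and then verify adhesion, strength, and the drop in ${\bf w}$ in exactly the same way. The observation that satellites have degree at most $k$ in $Z_t$ (so the high-degree witnesses must lie in $B_t$), and the weight accounting $(a-1)+(b-1)=(a+b-1)-1$, mirror the paper's use of $Q_t\subseteq B_t$ and $|Q_t^{(x)}|+|Q_t^{(y)}|-2=|Q_t|-2$; your explicit note on the bijection between crossing edges of $t_1t_2$ and edges of the cut in $Z_t$ (including satellite-to-satellite parallel edges) is a sound, slightly more careful rendering of a point the paper leaves implicit.
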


\begin{proof} Let $Z_{t}$ be a splittable torso of ${\cal D}$ and 
let  $L_t=\{z_{1},\ldots,z_{q_{t}}\}$ be the satellite vertices 
of $Z_{t}$. We denote by $t_{1},\ldots,t_{q_{t}}$ be the vertices of $T$ represented by $z_{1},\ldots,z_{q_{t}}$ respectively. Also we denote by $T_{1},\ldots,T_{q_{t}}$ the connected components of $T\setminus t$ that are subsumed by $z_{1},\ldots,z_{q_{t}}$, respectively. Let also $Q_{t}$ be the vertices of $Z_{t}$ that have degree at least $k+1$. 
As the adhesion of ${\cal D}$ is at most $k$, it follows that each vertex in $L_{t}$ has degree at most $k$. Therefore, $Q_{t}\subseteq B_{t}$.

We now construct a tree-partition ${\cal D}'$ of $G$. As $Z_{t}$ is $k$-splittable,  there is a  cut $(X,\overline{X})$ of $Z_{t}$, of size at most $k$
and two vertices $x,y$ where $\deg_{Z_{t}}(x),\deg_{Z_{t}}(y)\geq k+1$, and  $x\in X$ and $y\in\overline{X}$.
We set $Q_{t}^{(x)}=Q_{t}\cap X$ and $Q_{t}^{(y)}=Q_{t}\cap \overline{X}$
and keep in mind that  $x\in Q_{t}^{(x)}$ and $y\in Q_{t}^{(y)}$.
 Note that there is a set $I\subseteq [q_{t}]$
such that $X\cap Z_{t}=\{z_{i}\mid i\in I\}$ and  $\overline{X}\cap Z_{t}=\{z_{i}\mid i\in [q_{t}]\setminus I\}$. We  construct the tree $T'$ as follows: we start from $T\setminus t$, then  add two new adjacent vertices $t_{x}$ and $t_{y}$, make $t_{x}$ adjacent with all vertices in $\{t_{i}\mid i\in I\}$
and make $t_{y}$ adjacent with all vertices in  $\{t_{i}\mid i\in [q_{t}]\setminus I\}$.
We also define ${\cal B}'=\{B_{h}'\mid h\in V(T')\}$ such that if $h\in V(T)\setminus \{t\}$, then $B_{h}'=B_{h}$. Finally,  set $B_{t_{x}}'=B_{t}\cap  X$ and   $B_{t_{y}}'=B_{t}\cap \overline{X}$.    Observe  that
\begin{itemize} 
\item if $e=t_{x}t_{y}$, then $|{\bf cross}_{{\cal D}'}(e)|={\bf cut}_{Z_{t}}(X,\overline{X})\leq k$,  
\item if  $e= t_{y}t_{i}, i\in [q_{t}]\setminus I$, then $|{\bf cross}_{{\cal D}'}(e)|=|{\bf cross}_{\cal D}(tt_{i})|\leq k$, 
\item if $e= t_{x}t_{i}, i\in I$, then $|{\bf cross}_{{\cal D}'}(e)|=|{\bf cross}_{\cal D}(tt_{i})|\leq k$, and 
\item if $e\in E(T')\setminus E(T)|$, then $|{\bf cross}_{{\cal D}'}(e)|= |{\bf cross}_{{\cal D}}(e)|\leq k$.
\end{itemize}
From the above, we deduce that the adhesion of ${\cal D}'$ is at most $k$. 

Let now $v\in V(T')$. As ${\cal D}$ has strength  at least $k+1$, then for each $h\in V(T)\setminus \{t\}$
there is a vertex in $B'_{h}$ that has degree at least $k+1$. This, together with the fact that $x\in B_{t_{x}}'$ and  $y\in B_{t_{y}}'$ implies that ${\cal D}'$ has strength at least $k+1$. Therefore  ${\cal D}'$ is $k$-tight. \smallskip

We finally observe the following:
\begin{itemize} 
\item ${\bf s}_{{\cal D}'}(t_{x})=|Q_{t}^{(x)}|$,
\item ${\bf s}_{{\cal D}'}(t_{y})=|Q_{t}^{(y)}|$, and 
\item if $t\in V(T')\setminus \{t_x,t_y\}$, then ${\bf s}_{{\cal D}'}(t)={\bf s}_{{\cal D}}(t)$\end{itemize}
From the above, $({\bf s}_{{\cal D}'}(t_{x})-1)+({\bf s}_{{\cal D}'}(t_{x})-1)=|Q_{t}|-2=({\bf s}_{{\cal D}}(t)-1)-1$, therefore ${\bf w}({\cal D}')<{\bf w}({\cal D})$ as required.
\end{proof}

Given a tree $T$ and two members $a,a’$ of $E(T)\cup V(T)$
we define $aTa’$ as the unique path in $T$ starting from $a$ and finishing on $a’$.
Also, given a vertex $t\in V(T)$ we define its {\em status} of $t$
 as $${\bf status}(T,t)=\sum_{t'\in V(T)}|E(tTt')|,$$ i.e.,
 the sum of all the lengths of all the paths from $t$ to the rest of the vertices of $T$.

Let $(X,\overline{X})$ and $(Y,\overline{Y})$ be two cuts of a graph $G$.
We say that the cuts $(X,\overline{X})$ and $(Y,\overline{Y})$ are {\em parallel}
if $X\subseteq Y$, or  $\overline{X}\subseteq \overline{Y}$, or $X\subseteq \overline{Y}$, or  $\overline{Y}\subseteq {X}$.

\begin{lemma}\label{connects}
Let $k\in\Bbb{N}_{\geq 0}$.
If  $G$ is a $\theta_{k+1}$-immersion free  graph with at least one vertex of degree at least $k+1$,
Then $G$ has a $k$-tight tree-partition where each torso has exactly one vertex of degree greater than $k$.
\end{lemma}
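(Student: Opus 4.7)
The collection of $k$-tight tree-partitions of $G$ is non-empty, because the trivial tree-partition ${\cal D}_0 = (T_0, \{B_r = V(G)\})$ consisting of a single node $r$ has vacuously adhesion at most $k$ and strength $\Delta(G) \geq k+1$ by hypothesis. Among all $k$-tight tree-partitions of $G$, I pick ${\cal D}^{\star} = (T^{\star}, {\cal B}^{\star})$ minimising ${\bf w}({\cal D})$ and, subject to this, maximising $|V(T^{\star})|$.

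I claim every torso of ${\cal D}^{\star}$ contains exactly one vertex of degree greater than $k$. Since the strength of ${\cal D}^{\star}$ is at least $k+1$, each torso already contains at least one such vertex, so the content of the claim is that no torso contains two. Suppose for contradiction that a torso $Z_t$ of ${\cal D}^{\star}$ contains two vertices $x, y$ of degree at least $k+1$. Each satellite of $Z_t$ has degree at most the adhesion, which is at most $k$, so $x, y \in B_t$. If $Z_t$ were $k$-splittable, Lemma \ref{dispense} would produce a $k$-tight tree-partition ${\cal D}'$ of $G$ with ${\bf w}({\cal D}') < {\bf w}({\cal D}^{\star})$, contradicting the choice of ${\cal D}^{\star}$. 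Hence no $(x, y)$-cut in $Z_t$ has size at most $k$, and by Menger's theorem there exist $k+1$ pairwise edge-disjoint $(x, y)$-paths $P_1, \dots, P_{k+1}$ in $Z_t$; iterated lifting of successive incident edges along each $P_i$ realises $\theta_{k+1}$ as an immersion of $Z_t$.

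The decisive step is to promote $\theta_{k+1} \leq Z_t$ to $\theta_{k+1} \leq G$, which will contradict the $\theta_{k+1}$-immersion-free hypothesis on $G$. Using the extra extremality that $|V(T^{\star})|$ is maximum among ${\bf w}$-minimisers, I would first argue that for every satellite $z_j$ of every torso of ${\cal D}^{\star}$, the subgraph $G[\bar B_j]$ induced by $\bar B_j = \bigcup_{h \in V(T_j)} B_h$ is connected in $G$: otherwise one could detach a connected component of $G[\bar B_j]$ into its own subtree of $T^{\star}$ without violating $k$-tightness nor changing ${\bf w}$, thereby strictly increasing $|V(T^{\star})|$. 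Given this connectivity, I would lift the $k+1$ edge-disjoint paths of $Z_t$ back to $G$: each time some $P_i$ passes through a satellite $z_j$ it uses two of the at most $k$ edges incident to $z_j$ in $Z_t$, corresponding to two boundary vertices of $\bar B_j$ that must be joined by a path inside $G[\bar B_j]$, pairwise edge-disjointly across the at most $\lfloor k/2 \rfloor$ paths passing through $z_j$.

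The main obstacle is this simultaneous routing: for each satellite $z_j$, inside the connected graph $G[\bar B_j]$, up to $\lfloor k/2 \rfloor$ pairwise edge-disjoint paths must be found between prescribed pairs of boundary endpoints, which is a bounded multicommodity edge-disjoint-paths sub-problem. The argument I would use for solvability relies on the adhesion bound $\rho(\bar B_j) \leq k$ combined with a careful application of the submodular inequality on $\rho$ (recorded in the preliminaries) inside $G[\bar B_j]$, peeling off one pair at a time via Menger's theorem while keeping the already-routed paths intact. Concatenating these liftings with the boundary edges dictated by $Z_t$ yields $k+1$ pairwise edge-disjoint $(x,y)$-paths in $G$ and hence $\theta_{k+1} \leq G$, the desired contradiction.
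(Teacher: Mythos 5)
Your setup matches the paper's: the trivial one-bag tree-partition is $k$-tight, and choosing ${\cal D}^\star$ to minimise ${\bf w}$ and invoking Lemma~\ref{dispense} correctly rules out $k$-splittable torsos, so that a $k$-overloaded torso $Z_t$ would have to have $k+1$ edge-disjoint $(x,y)$-paths, i.e.\ $\theta_{k+1}\leq Z_t$. The divergence, and the gap, is in the step you yourself flag as decisive: promoting $\theta_{k+1}\leq Z_t$ to $\theta_{k+1}\leq G$. This implication is not a formality. A torso is obtained from $G$ by contracting each $\bar B_i$ to a single vertex, and contraction does not preserve immersions in general: for instance $\theta_4$ is a minor of $K_4$ (contract two disjoint edges) but not an immersion of $K_4$, since immersion cannot raise the maximum degree above~$3$. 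So $\theta_{k+1}\leq Z_t$ gives you a \emph{minor}-type relation that has no reason, on its own, to survive back in~$G$.

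Your proposed repair does not close this. The connectivity of each $G[\bar B_j]$ is not enough: when you try to lift the $k+1$ paths of $Z_t$ back to $G$, each satellite $z_j$ that several paths pass through gives rise to several prescribed terminal pairs that must be joined by pairwise edge-disjoint paths inside $G[\bar B_j]$. This is a multicommodity edge-disjoint paths instance, and Menger's theorem (a single-commodity statement) does not solve it, nor does "peeling off one pair at a time" in any obvious way, since routing one pair greedily can block another. A path $a$--$b$--$c$--$d$ with prescribed pairs $(a,d)$ and $(b,c)$ already has no edge-disjoint routing, and submodularity of $\rho$ on $2^{V(G)}$ gives no handle on this internal routing problem. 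You would at minimum need to argue that the extremal choice of ${\cal D}^\star$ together with $k$-tightness forces the routing to be feasible, and no such argument is given; it is unclear one exists in this form.

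The paper takes a different route precisely to avoid this: it never attempts to lift an immersion out of a torso. Instead, after fixing $x,y$ in a common bag $B_t$ and minimising both ${\bf w}$ and a status potential on $(T,t)$, it works with an $(x,y)$-cut $(X,\overline X)$ of size at most $k$ taken \emph{in $G$} (available because $\theta_{k+1}\not\leq G$), chosen to minimise a cost measuring how far the cut "spreads" over $T$. Submodularity of $\rho$ against the cut ${\bf cross}_{\cal D}(e)$ for an extremal edge $e$ then shows the cut can be pushed to respect the tree, and the resulting restructuring of $T$ strictly decreases the status, a contradiction. That is a genuinely different mechanism, and it sidesteps exactly the lifting problem on which your argument is stuck.
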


\begin{proof}
Notice that $G$ has at least one $k$-tight tree-partition 
that  consists of a single bag containing all the vertices of $G$. 
Among all $k$-tight tree-partitions of $G$, consider the set $\frak{D}$ 
containing every $k$-tight tree-partition of $G$, 
where   ${\bf w}({\cal D})$ takes the {minimum} possible value, say  $\ell$. 
From \autoref{daylight} it is enough to prove that $\ell=0$, i.e.,   
the tree-partitions 
in $\frak{D}$ contain no $k$-overloaded torsos. Assume, towards a contradiction, that $\ell>0$.
Consider two vertices $x$ and $y$, of $G$ each of degree at least $k+1$, that belong to the same bag  of some 
tree-partition of $\frak{D}$. Among all tree-partitions in ${\cal D}$ containing $x,y$ in the same bag, say  $B_{t}$, we choose ${\cal D}=(T,{\cal B})$ to be   one where ${\bf status}(T,t)$ is {minimized}.

As $\theta_{k+1}\not\leq G$,  the graph $G$ contains  
some   $(x,y)$-cut $(X,\overline{X})$ of size at most $k$.
Let ${\cal S}_{x,y}$ be the set of  all such   cuts.

We say that   an edge $e\in E(T)$ is {\em crossed} by $(X,\overline{X})$ if the cut of $G$ corresponding to $\textbf{cross}_{\mathcal{D}}(e)$ and the cut $(X,\overline{X})$ are not parallel.
As both $x$ and $y$ have degree at least $k+1$, there should be two edges $e_x$ and $e_y$
in $\textbf{cross}_{\mathcal{D}}(e)$ such that $e_{x}\subseteq X$ and $e_{y}\subseteq \overline{X}$.

Let  $(X,\overline{X})\in{\cal S}_{x,y}$. 
Let $e=t’t''$ be an edge of $E(T)$ that is crossed by $(X,\overline{X})$. 
We make the convention that, whenever we consider such an edge,  
we assume that $|E(tTt'')|<|E(tTt')|$, i.e., $t''$ is closer  to $t$ than $t'$, in $T$.
We say that such an edge 
$e$ is {\em $(X,\overline{X})$-extremal for} $(T,t)$ 
if there is no other edge $e'\neq e$ of $T$ that is crossed by $(X,\overline{X})$
and such that  $e\in E(e'Tt)$.
We denote by ${\bf extr}(X,\overline{X})$ the set of edges of $T$ that are $(X,\overline{X})$-extremal for $(T,t)$. Notice that ${\bf extr}(X,\overline{X})$ should be non-empty, as otherwise 
$(X,\overline{X})$ should induce  a  cut of $Z_{t}$, therefore $Z_{t}$ would be $k$-splittable and this, due to  \autoref{dispense}, would contradict  the {minimality} of ${\bf w}({\cal D})$. 
We next  define the {\em cost} of the cut $(X,\overline{X})$ as 
%

$${\bf cost}_{T,t}(X,\overline{X})=\sum_{t't''\in {\bf extr}(X,\overline{X})} |E(tTt')|.$$

%
%
We now pick the $(x,y)$-cut $(X,\overline{X})\in {\cal S}_{x,y}$ as one of {minimum} possible cost, in other words, 
${\bf cost}(X,\overline{X})=\min\{{\bf cost}(X',\overline{X}')\mid (X',\overline{X}')\in{\cal S}_{x,y}\}$.

Let $e=t't''$ be an $(X,\overline{X})$-extremal edge of $T$.  Let  $(A,\overline{A})$  be the cut of $G$ 
whose edges are ${\bf cross}_{\cal D}(e)$ and w.l.o.g., we assume that $x,y\in A$. Recall that 
\begin{eqnarray}
\rho(X)=\rho(\overline{X})\leq k& \mbox{and
}&\rho({A})=\rho(\overline{A})\leq k .\label{handling}
\end{eqnarray}
We next claim   that 
\begin{eqnarray}
\rho(A\cap X)=|E(A\cap X, \overline{A}\cup \overline{X})|>k.\label{softened}
\end{eqnarray}
To see~\eqref{softened}, notice that if this is not the case, then 
$(A\cap X,\overline{A}\cup \overline{X})\in {\cal S}_{x,y}$, because  $x\in A\cap X$ and $y\in \overline{A}\cup \overline{X}$.
Notice that if $t=t’$, then  ${\bf extr}(A\cap X,\overline{A}\cup \overline{X})={\bf extr}(X,\overline{X})\setminus\{t''t’\}$ while if $t^*$ is the unique neighbor of $t'$ in the path joining $t'$ and $t$,
then ${\bf extr}(A\cap X,\overline{A}\cup \overline{X})={\bf extr}(X,\overline{X})\setminus\{t''t’\}\cup\{t't^*\}$. In both cases, ${\bf cost}(A\cap X,\overline{A}\cup \overline{X})={\bf cost}(X,\overline{X})-1$, a contradiction to the {minimality} of the choice of $(X,\overline{X})$.
Working symmetrically on $\overline{A}$, instead of $A$, it follows that 
\begin{eqnarray}
\rho(A\cap \overline{X})=|E(A\cap \overline{X}, \overline{A}\cup X)|>k.\label{begrudge}
\end{eqnarray}
By the submodularity of $\rho$, we have that
\begin{eqnarray}
\rho(A\cap X)+\rho( A\cup X)\leq \rho(X)+\rho(A).
\label{bringing}\\
\rho({A}\cap \overline{X})+\rho( {A}\cup \overline{X})\leq \rho(\overline{X})+\rho({A}).
\label{operetta}
\end{eqnarray}
\begin{figure}[h]
	\begin{center}
			{\includegraphics[width=120mm]{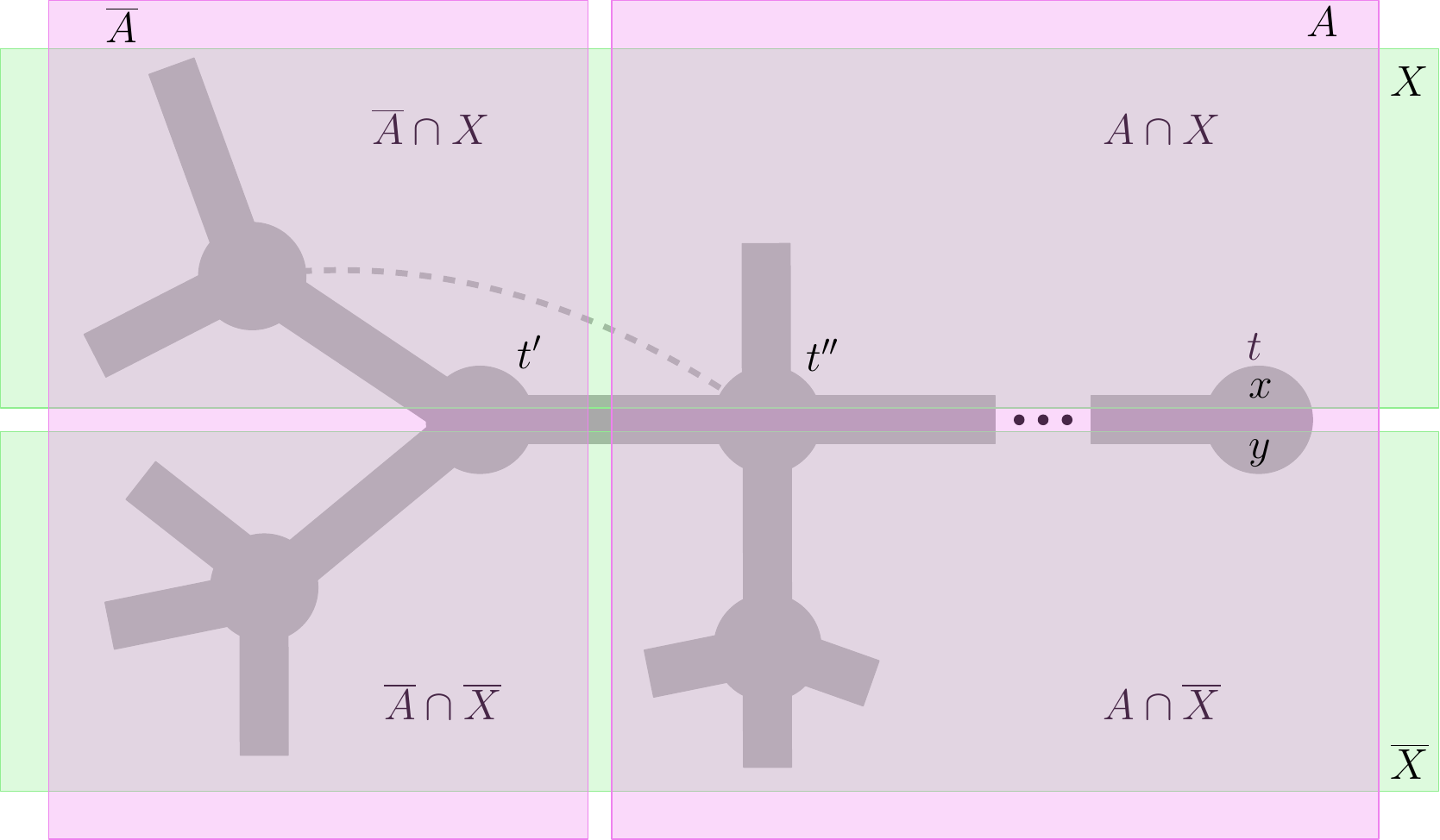}}
	\end{center}
	\caption{A visualization of the proof of \autoref{connects}.}
	\label{myidx}
	\end{figure}
Combining now \eqref{handling}, \eqref{softened}, and~\eqref{bringing}
and \eqref{handling}, \eqref{begrudge}, and~\eqref{operetta} we have that $\rho( A\cup X)\leq k  \mbox{~and~}  \rho(A\cup \overline{X})\leq k$ which can be rewritten
\begin{eqnarray}
\rho(\overline{A}\cap \overline{X})\leq k & \mbox{~and~} & \rho(\overline{A}\cap X)\leq k .\label{uttering}
\end{eqnarray}
Note that the vertices of $B_{t'}$ that have degree at least $k+1$
should all be in exactly one of $\overline{A}\cap \overline{X}$ and $\overline{A}\cap X$.
Indeed, if this is not correct, then $Z_{t'}$ should be $k$-splittable  and this, due to  \autoref{dispense}, would contradict  the {minimality} of ${\bf w}({\cal D})$. W.l.o.g. we assume that  $Q=B_{t'}\cap\overline{A}\cap{X}$ contains 
only vertices of degree at most $k$. 

Let $z_{1},\ldots,z_{q_{t'}}$ be the satellites of $Z_{t'}$
and let $t_{i}$ be the vertex of $T$ represented by $z_{i},i\in[q_{t'}]$,  
assuming, w.l.o.g., that $z_{1}$ represents $t''$ in $T$ (that is $t_{1}=t''$). Let also $T_{i}$
be the connected component of $T\setminus t’$  subsumed by $z_{i}$, for $i\in[q_{t'}]$.
As $t't''\in {\bf extr}(X,\overline{X})$, there is some  non-empty  $I\subseteq [2,q_{t'}]$ such that 
\begin{eqnarray}
\bigcup_{i\in I}\bigcup_{s\in V(T_{i})}B_{s}= (\overline{A}\cap {X})\setminus B_{t’}\mbox{~and~}\bigcup_{i\in [2,q_{t'}]\setminus I}\bigcup_{s\in V(T_{i})}B_{s}= ( \overline{A}\cap \overline{X})\setminus B_{t'}.
\end{eqnarray}

We now add the set $Q$ to $B_{t''}$ and remove it from $B_{t'}$, and also  remove from 
$T$ all edges in $\{t_{i}t'\mid i\in I\}$ and add the edges $\{t_{i}t''\mid i\in I\}$ to get $T'$ (in \autoref{myidx}, the new edge is  depicted by the dashed edge).
Observe that ${\cal D}'=(T',{\cal B})$ is a tree-partition of $G$ with adhesion at most $k$
and where all its nodes contain some vertex of degree at least $k+1$. 
Therefore ${\cal D}'$ is $k$-tight. 
Notice that, by the construction of $T'$,   ${\bf status}(T',t)<{\bf status}(T,t)$ a contradiction to the {minimality}
of ${\bf status}(T,t)$ in the choice of ${\cal D}=(T,{\cal B})$.\end{proof}

\begin{theorem}
\label{furrowed}
For every graph $G$ and $k\in \Bbb{N}$, $G$ is $\theta_{k+1}$-immersion free if and only if  $G\in {\cal A}_{k}^{(\leq k)}$.
\end{theorem}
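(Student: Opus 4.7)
The plan is to prove both directions by combining the tools already in place: Observation \ref{teaching} translates between $(\leq k)$-edge-sum closures and tree-partitions of adhesion at most $k$, Lemma \ref{immature} transfers $\theta_{k+1}$-immersions from $G$ to a torso, and Lemma \ref{connects} delivers, for $\theta_{k+1}$-immersion free graphs, a $k$-tight tree-partition whose torsos lie in $\mathcal{A}_k$. Neither direction will require new combinatorial work, so I expect the theorem to follow as a clean assembly of these ingredients.

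For the direction $(\Leftarrow)$, assume $G\in\mathcal{A}_k^{(\leq k)}$. By Observation \ref{teaching}, $G$ admits a tree-partition ${\cal D}=(T,{\cal B})$ of adhesion at most $k$ in which every torso lies in $\mathcal{A}_k$. Suppose toward a contradiction that $\theta_{k+1}\leq G$. Then Lemma \ref{immature} produces a torso $Z_t$ with $\theta_{k+1}\leq Z_t$. I will conclude by a short observation: any graph $H$ with $\theta_{k+1}\leq H$ must contain two distinct vertices each of edge-degree at least $k+1$, namely the two branch vertices witnessing the $k+1$ pairwise edge-disjoint paths produced from the liftings. This contradicts the defining property of $\mathcal{A}_k$, which permits at most one vertex of edge-degree exceeding $k$.

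For the direction $(\Rightarrow)$, assume $G$ is $\theta_{k+1}$-immersion free. Split into two cases. If every vertex of $G$ has edge-degree at most $k$, then $G\in\mathcal{A}_k$, and the trivial tree-partition with a single bag containing $V(G)$ witnesses $G\in\mathcal{A}_k^{(\leq k)}$ via Observation \ref{teaching}. Otherwise, $G$ has at least one vertex of edge-degree at least $k+1$, and Lemma \ref{connects} supplies a $k$-tight tree-partition ${\cal D}=(T,{\cal B})$ of $G$ in which every torso has exactly one vertex of edge-degree greater than $k$. This is precisely the statement that each torso lies in $\mathcal{A}_k$, and since ${\cal D}$ has adhesion at most $k$, a second application of Observation \ref{teaching} gives $G\in\mathcal{A}_k^{(\leq k)}$.

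The only step that requires a small verification is the observation used in $(\Leftarrow)$ that $\theta_{k+1}\leq H$ forces two vertices of $H$ to have edge-degree at least $k+1$; this follows by tracing the $k+1$ parallel edges of $\theta_{k+1}$ back through the liftings to $k+1$ pairwise edge-disjoint paths in a subgraph of $H$ sharing two common endpoints $x$ and $y$, each of which thus has at least $k+1$ incident edges in $H$. No further obstacle is anticipated, since the heavy structural work has been absorbed entirely into Lemma \ref{connects} and Lemma \ref{immature}.
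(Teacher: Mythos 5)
Your proof is correct and follows essentially the same route as the paper: both directions invoke \autoref{teaching} to translate between ${\cal A}_k^{(\leq k)}$ and tree-partitions of adhesion at most $k$, and then use \autoref{connects} for the forward direction and \autoref{immature} for the converse. The only (welcome) addition is that you spell out the small observation — which the paper leaves implicit — that a $\theta_{k+1}$-immersion forces two vertices of degree at least $k+1$, so no graph in ${\cal A}_k$ can contain $\theta_{k+1}$ as an immersion.
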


\begin{proof}
We prove first  ``only if'' direction.  If $G$ has no vertices  of degree at least $k+1$, then $G\in {\cal A}_{k}$ and the result follows trivially.
If $G$ has at least one vertex of degree at least $k+1$, then, because of \autoref{connects},  $G$ has a $k$-tight tree-partition of adhesion at most $k$ and whose torsos belong to ${\cal A}_{k}$. Then, from \autoref{teaching}, $G\in {\cal A}_{k}^{(\leq k)}$.

We next prove the ``if'' direction. Suppose that $G\in {\cal A}_{k}^{(\leq k)}$, therefore, from  \autoref{teaching}, $G$ has a  tree-partition ${\cal D}$  of adhesion at most $k$ whose torsos are all in ${\cal A}_{k}$.
As none of the torsos of ${\cal D}$ contains $\theta_{k+1}$ as an immersion,
because of \autoref{immature}, the same holds for $G$ and we are done.
\end{proof}

As mentioned by one of the reviewers, \autoref{furrowed} can alternatively  be proved by a suitable 
application of the theorem of Gomory and Hu~\cite{GomoryH61mult} (see also~\cite{DiestelHL19prof} and~\cite{DeVosMMS13anot}).

%

\subsection{An upper bound to edge-admissibility}

In this subsection we prove that $\theta_{k+1}$-immersion free graphs have edge-admissibility at most $2k-1$. In the end of this section, this will serve for proving  \autoref{presents}.

\paragraph{Carving decompositions.}
Given a tree $T$ we denote by $L(T)$ the set of all the vertices of $T$ that have degree at most $1$ and we call them the {\em leaves}  of $T$.
A {\em rooted tree} is a pair ${\bf T}=(T,r)$ where $T$ is a tree and $r\in V(T)$.
A {\em binary rooted tree} is a rooted tree ${\bf T}=(T,r)$ where all its non-leaf vertices have exactly two children.
If $v\in V(T)$, we define  
${\bf descl}_{{\bf T}}(v)$ as the set containing every leave $\ell$ of $T$ such that $v\in V(rT\ell)$.

Let $G$ be a graph and $S\subseteq V(G)$.
A {\em rooted carving decomposition} of  $G$ is a pair  $({\bf T},\sigma)$
consisting of  a rooted binary tree  ${\bf T}=(T,r)$ and a function  $\sigma:V(G)\to L(T)$.
We stress that $\sigma$ is not a bijection, i.e., we permit many vertices of $G$ to be mapped 
to the same leaf of $T$.
The {\em weight} of a vertex $t$ in $V(T)\setminus L(T)$ is defined as  
$${\bf w}(t)=|E_{G}(S_{1},S_{2})|$$ 
where $S_{i}=\sigma^{-1}({\bf descl}_{{\bf T}}(t_{i})),i\in[2]$
and 
$t_{1},t_{2}$ are the children of $t$ in $T$.
For every edge $e=tt'$ of $E(T)$, where $t'$ is a child of $t$,  we  define ${\bf cut}(e)$ as the set  $E_{G}(V_{1},V_{2})$
where $V_{1}=\sigma^{-1}({\bf descl}_{\bf T}(t’))$ and $V_{2}=V(G)\setminus V_{1}$.
We also define the {\em weight} of $e=tt’$ as ${\bf w}(e)=|{\bf cut}(e)|$.

\begin{lemma}
\label{holiness}
Let $G$ be a graph and $k\in\Bbb{N}_{\geq 1}$. If  $\theta_{k+1}\nleq G$, then $\eadm(G)\leq 2k-1$.
\end{lemma}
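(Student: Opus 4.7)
The plan is to apply \autoref{furrowed} and then verify the lemma via the min-max equivalence of \autoref{consiste}, by showing that $G$ has no $(2k,\infty)$-edge-hide-out; in fact the argument will deliver the sharper conclusion that no $(k+1,\infty)$-edge-hide-out exists.

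First, I would invoke \autoref{furrowed} together with \autoref{teaching} to fix a tree-partition $\mathcal D=(T,\mathcal B)$ of $G$ of adhesion at most $k$ whose torsos all lie in $\mathcal A_k$. Since every satellite of a torso has degree equal to the size of the corresponding ${\bf cross}_\mathcal D(\cdot)$ set and hence at most $k$, the unique high-degree vertex allowed in each torso must lie inside the bag itself; thus each bag $B_t$ contains at most one \emph{heavy} vertex (of $G$-degree greater than $k$), while every other vertex of $G$ is \emph{light} and has $\deg_G\le k$.

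Second, by \autoref{consiste} it suffices to show that every non-empty $S\subseteq V(G)$ admits some $v\in S$ with ${\bf supp}_{G,\infty}(v,S\setminus\{v\})\le 2k-1$, and I would split into two cases. If $S$ contains a light vertex $v$, then $E_G(v)$ itself is an $(\infty,v,S\setminus\{v\})$-edge-separator of size $\deg_G(v)\le k\le 2k-1$. Otherwise $S$ consists only of heavy vertices, and since each bag contains at most one heavy vertex we have $|S\cap B_t|\le 1$ for every $t\in V(T)$. Let $T_S$ be the minimal subtree of $T$ containing every $t$ with $S\cap B_t\ne\emptyset$, pick any leaf $\ell$ of $T_S$, and let $e$ be the unique edge of $T_S$ incident to $\ell$. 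Removing $e$ from $T$ gives two components $T_1,T_2$ with $\ell\in V(T_1)$; set $V_i=\bigcup_{t\in V(T_i)}B_t$. Because $\ell$ is a leaf of $T_S$ we have $V(T_1)\cap V(T_S)=\{\ell\}$, and because every bag outside $T_S$ is $S$-free by definition, $S\cap V_1=S\cap B_\ell=\{v^\star\}$ for the heavy vertex $v^\star$ of $B_\ell$, while $S\setminus\{v^\star\}\subseteq V_2$. Consequently ${\bf cross}_\mathcal D(e)$ is an $(\infty,v^\star,S\setminus\{v^\star\})$-edge-separator of size at most $k\le 2k-1$.

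Either case supplies the required $v\in S$, so \autoref{consiste} delivers $\eadm(G)\le 2k-1$.

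The main obstacle I anticipate is the setup of the second case: correctly choosing $\ell$ and verifying that the ``$\ell$-side'' $V_1$ of the edge $e$ meets $S$ only in $\{v^\star\}$. This relies on the combination of $\ell$ being a leaf of $T_S$ (so that no other $T_S$-node lies on the $\ell$-side of $e$) and of every $T$-bag outside $T_S$ being $S$-free by the definition of $T_S$. Once this is pinned down the adhesion bound delivers the required small separator for free, and in fact produces the sharper bound $\eadm(G)\le k$.
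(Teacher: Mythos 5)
Your proof is correct and takes a genuinely different route from the paper's. The paper proves the lemma directly and self-containedly: assuming a $(2k,\infty)$-edge-hide-out $S$, it builds a rooted carving decomposition by repeatedly bisecting across $(\le k)$-cuts separating two $S$-vertices, then walks a root-to-leaf path along which every carving edge has weight at most $2k-1$; the factor $2$ appears because at each step one adds the new $(\le k)$-cut at a node to the smaller ($\le k-1$) piece of the incoming cut. You instead invoke \autoref{furrowed} (hence the heavier \autoref{connects}) together with \autoref{teaching} to obtain a tree-partition of adhesion $\le k$ whose torsos lie in ${\cal A}_k$, observe that each bag then contains at most one vertex of $G$-degree greater than $k$, and for any candidate hide-out $S$ exhibit a vertex of support at most $k$, using either its own incident edge set (light case) or a single cross-set of the tree-partition (heavy case); the subtree/leaf argument pinning down $S\cap V_1=\{v^\star\}$ is sound. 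The trade-off is that your argument leans on the full structural machinery, whereas the paper's carving proof is self-contained; on the other hand yours actually yields the sharper bound $\eadm(G)\le k$, which would upgrade \autoref{presents} to the exact equivalence $\eadm(G)\le k\iff G\in{\cal A}_k^{(\le k)}$, or equivalently $\eadm(G)=\max_{u\neq v}\lambda_{G}(u,v)$; the same sharp bound can also be read off directly from a Gomory--Hu tree of $G$.
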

%

\begin{proof}
We show  that if $G$ is $\theta_{k+1}$-immersion free, then  $G$ cannot contain 
a $(2k,\infty)$-edge-hideout and therefore, from \autoref{consiste}, $\eadm(G)\leq  2k-1$.
Suppose to the contrary that  $S, |S|\geq 2$, is  a $(2k,\infty)$-edge-hideout of $G$. We build a rooted carving decomposition of  $G$ by applying the following precedure:\medskip

\noindent {\bf Step 1}. Consider $({\bf T},\sigma)$ where ${\bf T}=(T,v)$,
$T$ consists of only one vertex, that is the root $r$, and  
 $\sigma(v)=r$ for all $v\in V(G)$.
 
\noindent {\bf Step 2}. Let $\ell$ be a vertex of $T$ where $|\sigma^{-1}(\ell)\cap S|\geq 2$. If no such vertex exists, then {\bf stop}.

\noindent {\bf Step 3}.  Pick, arbitrarily, two distinct vertices $x_1$ and $x_2$ in $\sigma^{-1}(\ell)\cap S$. Notice that $G$ contains a $(x_1,x_2)$-cut $(X^{1},X^{2})$ of at most $k$ edges where $x_{i}\in X^{i}, i\in[2]$, otherwise, from Menger's theorem
there are $k+1$ pairwise edge disjoint paths from $x_{1}$ to $x_{2}$ in $G$, which implies the existence 
of $\theta_{k+1}$ as an immersion in $G$, a contradiction.
We now add in $T$ two new vertices $\ell_{1}$ and $\ell_{2}$ 
make them the children of $\ell$ and update $\sigma$ so that the vertices in $X^{i}\cap \sigma^{-1}(\ell)$ are now mapped in $\ell_{i}, i\in [2]$, i.e. we remove from $\sigma$ $(t,\sigma^{-1}(\ell))$ and we add $(t_{1},X^{1}\cap \sigma^{-1}(\ell))$ and $(t_{2},X^{2}\cap \sigma^{-1}(\ell))$.

\noindent {\bf Step 4}. Go to {\bf Step 2}.\medskip

Let  $({\bf T},\sigma)$ be the rooted carving decomposition produced by the above procedure.
By the construction of  $({\bf T},\sigma)$, each vertex of $T$ has weight  at most $k$ and for each leaf $\ell\in L(G)$, $|\sigma^{-1}(\ell)\cap S|=1$.
We construct a path $P$ of $T$  by applying the following procedure.\medskip

\noindent {\bf Step 1}. Let $P$ be the path of $T$ consisting 
of $r$ and one (arbitrarily chosen), say $t’$, of the children of $r$ (i.e., $P$ is just an edge).
Notice that ${\bf w}(\{r,t’\})={\bf w}(r)\leq k\leq 2k-1$ (recall that $k\geq 1$).

\noindent {\bf Step 2}. Let $e$ be the the last edge of $P$ (starting from $r$) 
and let $t$ be its endpoint that 
is also an endpoint of $P$ (different than $r$).   If $t$ is a leaf of $T$, then {\bf stop}.

\noindent {\bf Step 3}. Let $t_{1}$ and $t_{2}$ be the children of $t$ and let $e_{i}=tt_i, i\in[2]$.
 We partition the edges of ${\bf  cut}(e)$  into two sets, namely  $F_{1}$ and $F_{2}$ so that $F_{i}$
contains edges with an endpoint in ${\bf descl}_{{\bf T}}(t_{i}), i\in[2]$. Notice that 
${\bf  cut}(e_i)  =  F_{i}\cup E_{G}(\sigma^{-1}({\bf descl}_{{\bf T}}(t_{1})),\sigma^{-1}({\bf descl}_{{\bf T}}(t_{2}))),$ therefore, for $i \in[2]$,
\begin{eqnarray}
{\bf w}(e_i) =  |{\bf  cut}(e_i)| & = & |F_{i}|+|{\bf w}(t)|.\label{supports}
\end{eqnarray}
As ${\bf w}(e)\leq 2k-1$, one, say $F_{1}$, of $F_{1}, F_{2}$ should have at most $k-1$ edges. By applying~\eqref{supports} for $i=1$, we obtain that $|{\bf w}(e_1)|\leq k-1+{\bf w}(t)\leq 2k-1$. We now 
extend $P$ by adding in it the vertex $t_{1}$ and the edge $e_1$ and we update $e:=e_1$.

\noindent {\bf Step 4}. Go to {\bf Step 2}.\medskip

We just constructed a path $P$ in $T$ between $r$ and a leaf of $\ell$ of $T$ 
such that for every edge $e\in E(P)$,  ${\bf w}(e)\leq 2k-1$.
Notice that $\sigma^{-1}(\ell)$ contains exactly one vertex, say $x$, of $S$.
Moreover, if $f$ is the edge of $T$ that is incident to $\ell$,
then $\rho(\sigma^{-1}(\ell))={\bf w}(f)\leq 2k-1$, as $f$ is an edge of $P$ (the last one).
This implies that there is a set of $2k-1$ edges blocking every path from $x$ to $S\setminus \{x\}$.
Therefore, ${\bf supp}_G(\infty,x,S\setminus\{
x\})\geq 2k-1$, contradicting to the fact that 
$S$ is  a $(2k,\infty)$-edge-hideout of $G$.
\end{proof}

\begin{observation}
\label{conceive}
If $H$ and $G$ are graphs then $H\leq G\Rightarrow \eadm(H)\leq \eadm(G)$. 
\end{observation}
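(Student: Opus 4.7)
The plan is to prove the contrapositive via the hide-out characterization of \autoref{consiste}. Because $\eadm(G)=\delta_{\rm e}^{\infty}(G)\leq k$ is equivalent to the absence of a $(k+1,\infty)$-edge-hideout, it suffices to show that if $H$ has a $(k+1,\infty)$-edge-hideout, then so does $G$. By Menger's theorem applied to ${\bf supp}_{H,\infty}(v,S\setminus\{v\})$, a $(k+1,\infty)$-edge-hideout of $H$ is a set $S\subseteq V(H)$ such that for every $v\in S$ there exist $k+1$ pairwise edge-disjoint $(v,S\setminus\{v\})$-paths in $H$.

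Next I would unpack the definition of immersion into its standard ``model'' form: since $H$ is isomorphic to a graph obtained from a subgraph of $G$ by a sequence of liftings of incident edges, there exist an injection $\phi\colon V(H)\to V(G)$ and a family $\{Q_e\mid e\in E(H)\}$ of pairwise edge-disjoint walks in $G$ such that if $e=xy\in E(H)$, then $Q_{e}$ is a walk in $G$ from $\phi(x)$ to $\phi(y)$. This follows by induction on the number of liftings: taking a subgraph trivially provides such walks (each edge mapped to itself), and each lifting of two incident edges $e=uv,f=vw$ concatenates the walks $Q_e$ and $Q_f$ into a single walk for the resulting new edge $uw$, preserving pairwise edge-disjointness because $e$ and $f$ disappear from the edge-sum.

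Now I would transport the hideout: let $S'=\phi(S)\subseteq V(G)$. Fix $v\in S$ and consider $k+1$ pairwise edge-disjoint $(v,S\setminus\{v\})$-paths $P_{1},\dots,P_{k+1}$ in $H$. Replace each edge $e$ traversed by some $P_{i}$ by the corresponding walk $Q_{e}$ in $G$; concatenating these in order produces a walk $W_{i}$ in $G$ from $\phi(v)$ to some vertex of $S'\setminus\{\phi(v)\}$. Edge-disjointness of the $P_i$ together with the edge-disjointness of the walks $\{Q_e\}$ guarantees that $W_{1},\dots,W_{k+1}$ are pairwise edge-disjoint. Since every walk in a graph contains a path between its endpoints using only its own edges, we obtain $k+1$ pairwise edge-disjoint paths in $G$ from $\phi(v)$ to $S'\setminus\{\phi(v)\}$. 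Applying Menger's theorem again yields ${\bf supp}_{G,\infty}(\phi(v),S'\setminus\{\phi(v)\})\geq k+1$, so $S'$ is a $(k+1,\infty)$-edge-hideout of $G$.

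The two mildly delicate points are (i) verifying by induction on the lift sequence that immersions really do come with an edge-disjoint family of connecting walks, and (ii) extracting honest edge-disjoint paths from edge-disjoint walks; both are standard and short. Once these are in place, combining the transported hideout with \autoref{consiste} gives $\eadm(G)\geq k+1$ whenever $\eadm(H)\geq k+1$, which is exactly the stated monotonicity.
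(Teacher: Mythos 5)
Your proof is correct and follows the same strategy as the paper's: reduce to transporting a $(k+1,\infty)$-edge-hideout from $H$ to $G$ via \autoref{consiste}, use Menger's theorem to convert edge-support bounds into edge-disjoint paths, and observe that the immersion carries those paths over. The paper compresses the transport step into the one-line remark that ``the inverse of the lift operation does not alter the paths from a vertex of $S$ to the rest of $S$,'' whereas you explicitly build the immersion model ($\phi$ plus pairwise edge-disjoint connecting walks $Q_e$, by induction on the lift sequence) and extract honest paths from the resulting walks; this is the same argument, just written out in full.
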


\begin{proof}
Suppose that $H\leq G$ and that $k\leq \eadm(H)$.
From \autoref{consiste} $H$ contains a $(k,\infty)$-edge-hide-out $S\subseteq V(H)$.
Because of  Menger's theorem, for every vertex $v\in S$
there are at least $k+1$ pairwise edge-disjoint paths from $v$ to vertices of $S\setminus \{v\}$.
Notice that these paths also exist in $G$ as  the ``inverse'' of the  lift operation does not 
alter the paths from a vertex of $S$ to the rest of the vertices of $S$. These paths,  again using 
 Menger's theorem, imply that $S$ is also a $(k+1,\infty)$-edge-hide-out of $G$, therefore, again from \autoref{consiste}, $k\leq \eadm(G)$.
\end{proof}

We are now ready to give the proof of \autoref{presents}.

\begin{proof}[Proof of \autoref{presents}]
For the first part of the theorem, observe that $\eadm(\theta_{k+1})=k+1$, therefore, from \autoref{conceive}, $\theta_{k+1}\nleq G$. Using now  the ``only if'' direction of \autoref{furrowed} we obtain that $G\in {\cal A}_{k}^{(\leq k)}$, as required.

For the second part of the theorem, let $G\in {\cal A}_{k}^{(\leq k)}$, which by the ``if'' direction of \autoref{furrowed} implies that $\theta_{k+1}\nleq G$. Using now \autoref{holiness}, we conclude that  $\eadm(G)\leq 2k-1$.
\end{proof}

\section*{Acknowledgements} We wish to thank the anonymous reviewers for their comments and remarks that considerably improved the presentation of this paper.

\bibliographystyle{plain}
 \newcommand{\bibremark}[1]{\marginpar{\tiny\bf#1}}
  \newcommand{\biburl}[1]{\url{#1}}

 \end{document}